\documentclass{llncs}
\usepackage[T1]{fontenc}
\usepackage{amsmath, amssymb}
\usepackage{mathtools}
\usepackage{comment}

\usepackage{graphicx}
\usepackage{booktabs}
\usepackage{threeparttable}
\usepackage{makecell}
\usepackage{array}
\usepackage{pifont}

\usepackage{amssymb}    
\usepackage{centernot}  

\usepackage{booktabs}
\usepackage{tabularx}
\usepackage{threeparttable}
\usepackage{pifont}

\newcommand{\cmark}{\ding{51}}
\newcommand{\xmark}{\ding{55}}

\usepackage{booktabs}
\usepackage{tabularx}
\usepackage{threeparttable}
\usepackage{makecell}
\usepackage{ragged2e}
\usepackage{array}
\usepackage{pifont}

\newcolumntype{Y}{>{\RaggedRight\arraybackslash}X}
\newcolumntype{L}[1]{>{\RaggedRight\arraybackslash}p{#1}}
\newcolumntype{L}[1]{>{\raggedright\arraybackslash}p{#1}}

\usepackage{tikz}
\usetikzlibrary{fit,backgrounds, arrows.meta, calc, positioning, matrix}

\usepackage{algorithm}
\usepackage{algpseudocode}
\usepackage{placeins} 
\usepackage{url}
\newcommand{\doi}[1]{\url{https://doi.org/#1}}

\spnewtheorem{claimnum}[theorem]{Claim}{\bfseries}{\itshape}
\renewenvironment{claim}[1][]{\begin{claimnum}[#1]}{\end{claimnum}}

\spnewtheorem{lemmanum}[theorem]{Lemma}{\bfseries}{\itshape}
\renewenvironment{lemma}[1][]{\begin{lemmanum}[#1]}{\end{lemmanum}}

\spnewtheorem{propnum}[theorem]{Proposition}{\bfseries}{\itshape}
\renewenvironment{proposition}[1][]{\begin{propnum}[#1]}{\end{propnum}}

\spnewtheorem{coronum}[theorem]{Corollary}{\bfseries}{\itshape}
\renewenvironment{corollary}[1][]{\begin{coronum}[#1]}{\end{coronum}}

\usepackage{booktabs}
\usepackage{tabularx}
\usepackage{threeparttable}
\usepackage{ragged2e}
\usepackage{array}
\usepackage{pifont}

 \newcolumntype{Y}{>{\RaggedRight\arraybackslash}X}
\newcolumntype{L}[1]{>{\RaggedRight\arraybackslash}p{#1}}
\newcolumntype{C}[1]{>{\Centering\arraybackslash}p{#1}}

\providecommand{\qedsymbol}{\ensuremath{\square}}
\providecommand{\qedtrianglesymbol}{\ensuremath{\triangle}} 

\newcommand{\qedbox}{\hfill\qedsymbol}
\newcommand{\qedtri}{\hfill\qedtrianglesymbol}

\usepackage{bbding}

\author{Samuel German \Envelope}
\authorrunning{Samuel German}
\institute{University of California, San Diego, USA \\
\email{sgerman@ucsd.edu}}

\title{Layer-Based Width for \textsc{PAFP}}

\begin{document}
\maketitle

\begin{abstract}
The Path Avoiding Forbidden Pairs problem (\textsc{PAFP}) asks whether, in a
directed graph \(G\) with terminals \(s,t\) and a set \(\mathcal F\) of forbidden
vertex pairs, there is an \(s\)--\(t\) path that contains at most one endpoint
from each pair in \(\mathcal F\). We initiate the study of \textsc{PAFP} through a 
layer-based width measure. Our first focus is the directed union digraph
\(G\cup\mathcal F\), obtained by adding one arc per forbidden pair, oriented
according to a fixed reachability-compatible order. Let the BFS layer $L_d$ be the set of vertices at directed shortest-path distance $d$ from $s$, and define the BFS-width from $s$ as $\max_d |L_d|$. We show if \(G\cup\mathcal F\) has BFS-width \(b\) from \(s\) and
only \(\beta\) arcs going from a later BFS layer to an earlier one, then
\textsc{PAFP} is FPT parameterized by \(b+\beta\). The backward-arc hypothesis is essential: we show that  
\textsc{PAFP} remains \(\mathsf{NP}\)-complete when the union digraph has BFS-width 2.

We also study layer-based width parameters on the input graph.  We show that if the input DAG has
BFS-width at most \(2\) and only \(k\) backward input arcs, then \textsc{PAFP}
can be decided in \(2^k |I|^{O(1)}\) time, with unrestricted forbidden pairs.
This width-\(2\) result is tight in the input-layer setting: inspection of a
classical reduction shows
\(\mathsf{NP}\)-completeness on input DAGs of BFS-width \(3\) with no
backward input arcs. 

Moreover, we study exact-length layers in the input graph, where the \(d\)-th layer consists of the
vertices reachable from \(s\) by a directed path of length exactly \(d\). For
DAGs of exact-length width at most \(2\), \textsc{PAFP} is polynomial-time
decidable by a \(2\)-SAT encoding of fixed-length paths. This bound is tight:
the same classical reduction yields \(\mathsf{NP}\)-completeness on DAGs
of exact-length width \(3\). Unlike previously known polynomial-time regimes
for \textsc{PAFP}, which restrict the forbidden-pair set in order to obtain tractability, our 2 input graph tractability results allow unrestricted forbidden
pairs and input graphs with exponentially many \(s\)--\(t\) paths.

\keywords{Forbidden pairs \and Directed acyclic graphs \and BFS layers
\and Backward arcs \and Width parameters \and Fixed-parameter tractability
\and \(\mathsf{NP}\)-completeness}
\end{abstract}

\section{Introduction}
\subsection{Background}
The \emph{Path Avoiding Forbidden Pairs} problem (\textsc{PAFP}) asks, given a directed graph $G=(V,E)$,
terminals $s,t\in V$, and forbidden vertex pairs $\mathcal F\subseteq \binom{V}{2}$, whether there exists an
$s$--$t$ path that contains at most one endpoint from each pair in~$\mathcal F$. \textsc{PAFP} is a natural abstraction for path selection under pairwise incompatibility constraints. It first arose in automatic software testing: a program is represented by a directed graph whose vertices represent code segments and whose edges represent the flow of control between code segments; an $s$--$t$ path represents a complete entry-to-exit test execution of the program, and forbidden pairs encode mutually unexecutable branches. In this setting, \textsc{PAFP} was proposed as a means to make software testing more efficient by only considering paths through a program that contain at most one branch of each pair of mutually unexecutable branches~\cite{krause73}. 

The same abstraction appears in bioinformatics, where gene finding with RT-PCR evidence models candidate transcripts as $s$--$t$ paths in a splicing graph whose vertices are non-overlapping DNA segments and whose edges indicate that one segment may immediately follow another in a transcript; an RT-PCR test selects two primer vertices $u$ and $v$ together with an observed product length $\ell$, and only transcript paths whose $u$--$v$ subpath has length $\ell$ explain the test. By setting lengths to an unattainable value and capturing combinations of segments that are incompatible with experimental evidence via forbidden pairs, \textsc{PAFP} is able to help gene-prediction algorithms rule out biologically implausible transcripts while incorporating RT-PCR data to improve sensitivity and reveal novel splicing variants~\cite{kovac09rtpcr}.

In application settings of \textsc{PAFP}, the surrounding pipeline typically requires repeated path computations on large instances,
so an exponential-time search would be rarely acceptable in practice. This difficulty is already present on
directed acyclic graphs: \textsc{PAFP} is \textsf{NP}-complete
even when the input digraph is acyclic~\cite{gabow76}.
Accordingly, positive results are valuable because they isolate structural assumptions under which exact
computation becomes feasible, while hardness results show that some such extra structure is necessary and
thereby motivate preprocessing, parameterization, or heuristics when no exploitable structure is available.

\emph{Union/constraint graph.}
A mainstream way to analyze the complexity of constrained graph problems is to encode constraints as edges between vertices,
the \emph{primal graph} viewpoint in constraint satisfaction~\cite{samer_szeider10}.
\textsc{PAFP} fits this unusually well: instances contain adjacency constraints (edges of~$G$) and incompatibility
constraints (pairs in~$\mathcal F$) on the same vertex set. Bodlaender--Jansen--Kratsch exploit this interaction-graph view for forbidden-pairs path
problems, obtaining algorithmic and kernel results under structural restrictions on an undirected union/constraint
graph that contains the (undirected) edges of $G$ together with an edge for each forbidden pair (e.g.\ treewidth-based parameters)~\cite{bodlaender13}. 

For \textsc{PAFP} it is common to orient forbidden pairs consistently with an order on the vertex set (topological or reachability-based) and analyze the ensuing structure~\cite{kolman09,kovac13}. Accordingly, our study of how layer-based width parameters affect \textsc{PAFP} includes consideration of the union graph obtained by adding the forbidden pairs as arcs to the input graph, where each forbidden pair is oriented according to a fixed reachability-compatible order (see Section~\ref{sec:def}). The reachability-compatible order rule $G \mapsto \prec_G$ is fixed only to make the union digraph (and hence its BFS-width) uniquely defined. The particular reachability-compatible order rule is immaterial: all of our results hold for any fixed choice of $G \mapsto \prec_G$ (Remark~\ref{rem:order-choice}).

\emph{BFS-width.}
BFS layerings (grouping vertices by shortest-path distance from a root) are a standard structural tool in graph theory and graph algorithms, underlying, for example, layered path decompositions and separator theorems for minor-closed graph classes~\cite{bannister19,dujmovic17}. In particular, recent parameterized algorithms for length-constrained path problems on graphs reason explicitly about BFS distance layers along candidate paths, bounding the number of ‘reused’ layers as a function of the detour parameter~\cite{BezakovaCDF17}. A natural associated width statistic is the maximum size of any BFS distance layer.
This statistic has been studied as a width parameter in the undirected setting, with emphasis on its polynomial-time
computability~\cite{eppstein25}. Since \textsc{PAFP} is posed on directed graphs, in our work ``BFS-width'' and ``BFS layer'' denote the directed versions of these concepts unless stated otherwise, measured in terms of directed shortest-path distances (see Section~\ref{sec:def} for a formal definition).

\textsc{PAFP} is an inherently directed $s$--$t$ path problem, so distance layerings rooted at $s$ are a natural structural lens; in particular any instance, if it has a safe $s$--$t$ path, has a shortest one. Despite this, BFS-width has not been studied for \textsc{PAFP}. We initiate this direction by asking whether bounded BFS-width of the union digraph can help to make \textsc{PAFP} tractable, analogously to how the undirected variant of \textsc{PAFP} is fixed-parameter tractable parameterized by the treewidth of the undirected union/constraint graph~\cite{bodlaender13}.

There is a genuine positive baseline for this question: \textsc{PAFP} itself is indeed fixed-parameter tractable parameterized by the undirected BFS-width of the underlying undirected union/constraint graph\footnote{Let \(U\) be the part reachable from \(s\) of the underlying undirected union/constraint graph, i.e., the graph obtained  by adding an undirected edge \(uv\) for each forbidden pair \(\{u,v\}\in\mathcal F\) to the underlying undirected graph of the digraph \(G\). If \(L_0,L_1,\ldots\) are the undirected BFS layers of \(U\) from \(s\), then every edge of \(U\) has endpoints in the same or adjacent layers, and each vertex appears in at most two consecutive bags \(B_d:=L_d\cup L_{d+1}\). Hence \((B_d)_d\) is a path decomposition of width at most \(2\,\mathrm{bfsw}_{\mathrm{und}}(U,s)-1\), and therefore \(\operatorname{tw}(U)\le 2\,\mathrm{bfsw}_{\mathrm{und}}(U,s)-1\). The bounded-treewidth MSO-on-structures argument used in Bodlaender--Jansen--Kratsch~\cite[Prop.~6]{bodlaender13} thus yields fixed-parameter tractability: their framework applies unchanged in the directed setting of \textsc{PAFP} via the standard mixed-graph encoding of directed edges by tail/head incidence relations~\cite[Sec.~3, Def.~3.1, pp.~314--315]{arnborgLS91}. Appendix~\ref{app:mso}, which  provides additional details for the proof of Theorem~\ref{thm:bfs-backward-fpt}, explicitly details this encoding.}. This raises the possibility that for the union digraph which stays true to \textsc{PAFP}'s directed nature, tractability can likewise be achieved by bounding its BFS-width.

\subsection{Our results} We show that bounded BFS-width of the union digraph does give tractability for \textsc{PAFP} when there are a bounded number of backward-arcs between BFS layers of the union digraph (Theorem~\ref{thm:bfs-backward-fpt}). The union digraph is helpful as a necessary structural tool here, because inspection of the classical layered reduction of
Gabow et al.\ shows that \textsc{PAFP} is \textsf{NP}-complete on instances for which the input graph has both BFS-width 3 and no backward arcs (Proposition~\ref{prop:input-width3-no-backward-hard}).

In general, we show that not only does bounded BFS-width of the union digraph fail to obtain tractability for \textsc{PAFP}, but that every \textsc{PAFP} instance whose input graph is a DAG has an equivalent polynomial-time computable
\emph{normal form} whose union digraph is a DAG of BFS-width at most~$2$ --- the smallest nontrivial value
(Theorem~\ref{thm:pl-to-bfsw2}). Hence \textsc{PAFP} remains \textsf{NP}-complete even in this ultra-thin,
ladder-like regime where a shortest-path exploration has at most two frontier choices at every distance
(Theorem~\ref{thm:bfsw2-union-npc}). 

Our normal form works by producing an instance whose union digraph contains an unbounded number of backward arcs, showing that the bounded backward arcs assumption of Theorem~\ref{thm:bfs-backward-fpt} identifies the precise structural condition under which bounded BFS-width of the union digraph engenders tractability. We also use the BFS-width 2 normal form to show that \textsc{PAFP} is \textsf{NP}-complete when the input graph has BFS-width 2 (Corollary~\ref{cor:bfsw2-input-hard}). We contrast this with our result that \textsc{PAFP} is tractable when the input graph is a DAG of BFS-width 2 with a bounded number of backward arcs (Theorem~\ref{thm:input-bfsw2-few-backward-fpt}), which again shows backward arcs between BFS layers to be a useful parameter.

To complete our study of layer-based width for \textsc{PAFP}, we also consider a second rooted layering notion on DAGs, namely
exact-length layers \(E_G(s,d)\), which record which vertices can occur at the
$d$th position of a directed path from \(s\). For this notion, we show that
\textsc{PAFP} is polynomial-time solvable when \(\mathrm{elw}(G,s)\le 2\)
(Theorem~\ref{thm:exactwidth2}). This width bound is tight, as analysis of the reduction of
Gabow et al.\ shows that \textsc{PAFP} remains \textsf{NP}-complete already on DAG
instances of exact-length width~$3$ (Proposition~\ref{prop:elw3-hard}). 

While there is substantial work on tractable special cases of \textsc{PAFP}, polynomial-time cases thereof rely on additional restrictions on the forbidden-pair set (see Table~\ref{tab:tractable-comparison}). By contrast, our Theorems~\ref{thm:input-bfsw2-few-backward-fpt} and ~\ref{thm:exactwidth2} simultaneously allow (i) unrestricted forbidden pairs and (ii) input graphs with exponentially many $s$--$t$ paths (Remark~\ref{rem:elw-exp}).
\setlength{\textfloatsep}{7pt}
\setlength{\intextsep}{20pt}

\newcommand{\reftop}[3][0.9ex]{%
  \raisebox{#1}[0pt][0pt]{%
    \parbox[t]{#2}{\RaggedRight #3}%
  }%
}

\renewcommand{\tabularxcolumn}[1]{m{#1}}
\newcolumntype{Y}{>{\RaggedRight\arraybackslash}X}

\newcolumntype{L}[1]{>{\RaggedRight\arraybackslash}m{#1}}
\newcolumntype{C}[1]{>{\Centering\arraybackslash}m{#1}}
\begin{table}[!htbp]
\centering
\footnotesize
\setlength{\tabcolsep}{4.5pt}
\renewcommand{\arraystretch}{1.525}
\setlength{\arrayrulewidth}{0.6pt}
\begin{threeparttable}
\caption{Survey of tractable regimes for \textsc{PAFP}.
Known polynomial-time results restrict the forbidden-pair set, either by directly imposing structure on \(\mathcal{F}\) or by limiting how \(\mathcal{F}\) can interact with $G$. By contrast,
Theorems~\ref{thm:input-bfsw2-few-backward-fpt} and \ref{thm:exactwidth2} allow for arbitrary, unrestricted
\(\mathcal{F}\) while still allowing for exponentially many $s$--$t$ paths in the input graph.}
\label{tab:tractable-comparison}
\scalebox{1}{%
\begin{minipage}{\linewidth}
\begin{tabularx}{\linewidth}{|L{1.8cm}|Y|C{2.15cm}|}
\hline
\textbf{Reference} & \textbf{Restriction} & \textbf{Unrestricted \(\mathcal{F}\)?} \\
\hline

\textbf{This paper}
& \(\mathrm{bfsw}(G,s)\le 2\), $G$ is a DAG, and backward arcs in $G$ are bounded; 
\(\mathcal{F}\) can be any subset of $\binom{V}{2}$. 
& \textbf{\cmark} \\
\hline

\textbf{This paper}
& \(\mathrm{elw}(G,s)\le 2\) and $G$ is a DAG; 
\(\mathcal{F}\) can be any subset of $\binom{V}{2}$. 
& \textbf{\cmark} \\
\hline

Kov\'a\v{c} (2013)~\cite{kovac13}
&  For a fixed topological order \(\prec\) of $G$, any two distinct forbidden pairs \(\{u,v\}\) and \(\{x,y\}\) with $u \prec v$, $x \prec y$, and $u \prec x$ satisfy
\(x \prec v \prec y\).

& \xmark \\
\hline

Kolman--Pangr\'ac (2009)~\cite{kolman09}
& \emph{Hierarchical forbidden pairs}: With respect to the reachability order \(\prec\) of $G$,
no two oriented forbidden pairs \((u,v)\) and \((x,y)\) satisfy \(u \prec x \prec v \prec y\).
& \xmark \\
\hline

Bodlaender--Jansen--Kratsch (2013)~\cite{bodlaender13}\tnote{a}
& Bounded treewidth of the underlying undirected union/constraint graph. This restricts $\mathcal{F}$ from making the union/constraint graph have large treewidth.
& \xmark \\
\hline

Yinnone (1997)~\cite{Yinnone1997}
& \emph{Skew-symmetry}: For every \(\{u,u'\}, \{v,v'\}\in \mathcal{F}\), the arc condition
\((u,v)\in E \Rightarrow (v',u')\in E\) holds.
& \xmark \\
\hline
\end{tabularx}
\end{minipage}%
}
\vspace{4pt}
\begin{tablenotes}[flushleft]
\footnotesize
\item[a] This result was formulated in an undirected setting for \textsc{PAFP}; for the exact directed setting in this paper, combine
\cite[Prop.~6]{bodlaender13} with bounded-treewidth MSO model
checking for structures with directed edge relations
\cite[Sec.~3, Def.~3.1, pp.~314--315]{arnborgLS91}. Appendix~\ref{app:mso} explicitly spells out the formal details of this combination.

\end{tablenotes}

\end{threeparttable}
\end{table}

\section{Definitions}\label{sec:def}

\paragraph{Graph conventions and \textsc{PAFP}.}
All digraphs in this paper are finite and loopless. For a digraph
\(G=(V,E)\), an arc \((u,v)\in E\) has tail \(u\) and head \(v\).
A directed path is vertex-simple unless explicitly stated otherwise.

A \textsc{PAFP} instance is a tuple
\(
I=(G=(V,E),s,t,\mathcal F)
\),
where \(s,t\in V\), \(s\neq t\), and
\(\mathcal F\subseteq \binom V2\). A directed \(s\)-\(t\) path \(P\)
is safe if it contains at most one endpoint of every pair in
\(\mathcal F\). The \textsc{PAFP} problem asks whether such a safe
\(s\)-\(t\) path exists.

\paragraph{Layer widths.} Let $G = (V,E)$ be a directed graph. For \(u,v\in V\), let \(\operatorname{dist}_G(u,v)\) be the length of a
shortest directed \(u\)-\(v\) path, with \(\operatorname{dist}_G(u,v)=\infty\)
if no such path exists.
For \(d\in\mathbb Z_{\ge 0}\), define the directed BFS layer from \(s\) by
\[
L_G(s,d)\coloneq \{v\in V : \operatorname{dist}_G(s,v)=d\},
\]
and define the BFS-width from \(s\) by
\(
\operatorname{bfsw}(G,s)\coloneq
\max_{d\in\mathbb Z_{\ge 0}} |L_G(s,d)|
\).
Define the exact-length layer
\[
E_G(s,d)\coloneq
\{v\in V : \exists \text{ a directed \(s\)-\(v\) path of length exactly \(d\)}\},\]
and define
\(
\operatorname{elw}(G,s)\coloneq
\max_{d\in\mathbb Z_{\ge 0}} |E_G(s,d)|
\).

\paragraph{Reachability-compatible orders.} Let $G = (V,E)$ be a directed graph, and 
for vertices \(u,v\in V\), write \(u\leadsto_G v\) if \(u\neq v\)
and \(G\) contains a directed \(u\)-\(v\) path. A strict total order
\(\prec\) on \(V\) is reachability-compatible with \(G\) if, for all
\(u,v\in V\),
\(
u\leadsto_G v \land v\not\leadsto_G u
\implies
u\prec v
\).
Such orders always exist: topologically order the condensation DAG of \(G\)
and break ties deterministically inside strongly connected components and
among incomparable components. When \(G\) is a DAG, reachability-compatible
orders are precisely topological orders.

\paragraph{$\triangleleft$-oriented forbidden-pair arcs.} Let \(G=(V,E)\) be a directed graph, let \(\mathcal F \subseteq \binom{V}{2}\), and let
$\triangleleft$ be a strict total order on $V$. Define the $\triangleleft$-oriented forbidden-pair arcs
\(
A_{\triangleleft} (\mathcal{F})\coloneq \{(u,v)\mid \{u,v\}\in\mathcal F,\ u \triangleleft v \}\).
\paragraph{Union digraph.}
Fix a function $G \mapsto \prec_G$ that maps each digraph $G = (V,E)$ to a  reachability-compatible strict total order $\prec_G$ on $V$.
For a \textsc{PAFP} instance \(I=(G=(V,E),s,t,\mathcal F)\), define the
union digraph
\(
G\cup\mathcal F
\coloneq
(V,E\cup A_{\prec_G}(\mathcal F))\).

\paragraph{Backward arcs.}
Let \(H=(V,A)\) be a digraph with root \(s\). Let
\(
R_H(s)\coloneq \{v\in V:\operatorname{dist}_H(s,v)<\infty\}
\)
be the set of vertices reachable from \(s\). For \(v\in R_H(s)\), write
\(
\lambda_H(v)\coloneq \operatorname{dist}_H(s,v)\).
An arc \((u,v)\in A\) is \(s\)-backward if
\(
u,v\in R_H(s)\) and 
\(
\lambda_H(v)<\lambda_H(u).
\)
Equivalently, \(s\)-backward arcs are the backward arcs of the induced rooted
subdigraph \(H[R_H(s)]\). Let
\(
B^-_H(s)
\coloneq
\{(u,v)\in A(H[R_H(s)]):\lambda_H(v)<\lambda_H(u)\}\).

\begin{remark}\label{rem:order-choice}
The fixed rule \(G\mapsto \prec_G\) is used only to make \(G\cup \mathcal F\)
uniquely defined on arbitrary \textsc{PAFP} instances. Our results hold for any fixed  reachability-compatible order rule. 
\end{remark}

\section{BFS Layers of the Union Digraph}
\subsection{BFS-width with bounded backward arcs gives tractability}
\label{subsec:bfs-backward-fpt}

\begin{theorem}[FPT for bounded union BFS-width and bounded backward arcs]
\label{thm:bfs-backward-fpt}
Let
\(
I=(G=(V,E),s,t,\mathcal F)
\)
be a \textsc{PAFP} instance where \(G\) is an arbitrary directed graph, and let
\(
H:=G\cup\mathcal F
\)
be its union digraph. Put
\(
b:=\operatorname{bfsw}(H,s)\),
\(\beta:=|B^-_H(s)|\).
Then \textsc{PAFP} on \(I\) is solvable in time
\(
f(b+\beta)\cdot |I|^{O(1)}\)
for some computable function \(f\). In particular, for every fixed pair of
constants \(b\) and \(\beta\), \textsc{PAFP} is polynomial-time decidable on
instances whose union digraph has BFS-width at most \(b\) from \(s\) and at
most \(\beta\) \(s\)-backward arcs.
\end{theorem}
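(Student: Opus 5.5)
We reduce to bounded treewidth and then use the MSO framework of Bodlaender--Jansen--Kratsch~\cite[Prop.~6]{bodlaender13}, exactly as in the footnote on undirected BFS-width (details in Appendix~\ref{app:mso}).

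\emph{Step 1: restrict to what is reachable from $s$.} Let $R:=R_H(s)$. Every vertex of a safe $s$--$t$ path is reachable from $s$ in $G\subseteq H$, hence lies in $R$. If $t\notin R$, answer \textsc{no}. Otherwise replace $I$ by the instance induced on $R$, keeping only the pairs of $\mathcal F$ with both endpoints in $R$. Pairs with an endpoint outside $R$ can never be violated, so this is equivalent. On $H[R]$ the BFS distances and the backward-arc set $B^-_H(s)$ are unchanged, so $b$ and $\beta$ are unchanged.

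\emph{Step 2: a path decomposition of small width.} Let $U$ be the underlying undirected graph of $H[R]$. It is the undirected union/constraint graph of the restricted instance, since every forbidden pair yields an arc of $H$. Write $L_d:=L_H(s,d)$.
\begin{itemize}
\item A forward arc $(u,v)$ of $H$ satisfies $\lambda_H(v)\le \lambda_H(u)+1$, by the triangle inequality for directed distances.
\item Hence every non-backward arc joins two vertices in the same layer or in consecutive layers.
\item The bags $B_d:=L_d\cup L_{d+1}$ therefore cover all non-backward arcs, and each vertex lies in at most two consecutive bags.
\item To handle the $\beta$ backward arcs, let $X$ be the set of their tails, so $|X|\le\beta$, and add $X$ to every bag.
\end{itemize}
Now every edge of $U$ is covered. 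The bags containing a vertex $v\notin X$ are still consecutive, and a vertex of $X$ lies in every bag. This gives a path decomposition of $U$ of width at most $2b+\beta-1$, so $\operatorname{tw}(U)\le 2b+\beta-1$. The decomposition is computable in polynomial time by running BFS in $H$.

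\emph{Step 3: MSO model checking.} Encode the instance as a relational structure: vertices, arc-incidence relations for tail and head as in~\cite[Def.~3.1]{arnborgLS91}, a binary relation for $\mathcal F$, and constants $s,t$. The Gaifman graph of this structure is $U$ together with incidence vertices, and its treewidth is still bounded by a function of $2b+\beta$. The property ``there is a set $S$ of vertices containing $s$ and $t$, with no pair of $\mathcal F$ inside $S$, such that $S$ contains a directed $s$--$t$ path'' is expressible in MSO. The last clause can be written as: every vertex set containing $s$, closed under out-arcs within $S$, contains $t$. This property is equivalent to \textsc{PAFP}: a safe path gives such an $S$, namely its vertex set. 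Conversely, any directed $s$--$t$ path inside $S$ contains at most one endpoint of each pair, because $S$ already does. Courcelle's theorem, with the decomposition from Step 2 supplied, then gives running time $f(b+\beta)\cdot|I|^{O(1)}$.

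The main obstacle is Step 2's handling of backward arcs. A backward arc can jump many layers, so it has to be absorbed by adding one endpoint to all bags; this is exactly where the parameter $\beta$, rather than $b$ alone, enters the bound. A secondary point to check is that the forward-arc distance property holds for the directed distance in $H$ even though the forbidden-pair arcs are oriented by $\prec_G$. It does, because the property is purely about arcs of $H$ and uses nothing about how they were oriented.
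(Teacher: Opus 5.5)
Your proof is correct and follows essentially the paper's route: the bags $L_d\cup L_{d+1}$ of the BFS layering of $H$, with backward-arc endpoints added to every bag, give a path decomposition of the undirected union/constraint graph, and bounded-treewidth MSO model checking then finishes the argument. Your deviations are all sound. Adding only the tails of backward arcs already covers every edge, since each head lies in some bag and each tail lies in all of them, so your width bound $2b+\beta-1$ improves the paper's $2b+2\beta-1$; and restricting directly to $R_H(s)$ rather than $R_G(s)$ spares you the paper's induced-subgraph monotonicity step.
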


\begin{proof}
As a tool in the analysis, this proof talks about the concept of the auxiliary undirected union/constraint graph; our parameters $b$ and $\beta$ here are undefined for it. For a vertex set \(S\subseteq V\), define the auxiliary undirected
union/constraint graph restricted to \(S\) by
\(
U[S]\coloneq
\bigl(S,\,
\{\{u,v\}\in \binom S2 :
(u,v)\in E \text{ or } (v,u)\in E \text{ or } \{u,v\}\in\mathcal F\}
\bigr).
\)

Let
\(
R_H:=\{v\in V:\operatorname{dist}_H(s,v)<\infty\}\)
and
\(
H_R:=H[R_H]\).
We first show that \(U[R_H]\) has pathwidth at most
\(
2b+2\beta-1\). For \(d\ge 0\), write
\(
L_d:=L_{H_R}(s,d),
\)
and let
\(
D:=\max\{d:L_d\neq\emptyset\}\).
All distances below are taken in \(H_R\). These distances agree with the
corresponding distances in \(H\): if \(v\in R_H\), then every vertex on a
shortest \(s\)-to-\(v\) path in \(H\) is itself reachable from \(s\) in \(H\), and
hence lies in \(R_H\). Let
\(
Z:=\{x\in R_H : x \text{ is an endpoint of some arc in } B^-_H(s)\}\).
Then
\(
|Z|\le 2\beta\).
For \(d=0,1,\ldots,D\), define
\(
X_d:=L_d\cup L_{d+1}\cup Z\),
where \(L_{D+1}:=\emptyset\). We claim that
\(
(X_0,X_1,\ldots,X_D)\)
is a path decomposition of \(U[R_H]\). First, every vertex of \(R_H\) is covered. If \(v\in L_i\), then \(v\in X_i\),
and if \(i\ge 1\), also \(v\in X_{i-1}\). Moreover, the bags containing a fixed
vertex form a contiguous interval: a vertex in \(Z\) occurs in every bag, while
a vertex \(v\notin Z\) with \(v\in L_i\) occurs only in \(X_i\) and, if \(i\ge 1\),
in \(X_{i-1}\).

It remains to verify edge coverage. Let \(\{x,y\}\) be an edge of \(U[R_H]\).
By definition of \(U[R_H]\), either one of \((x,y)\) and \((y,x)\) is an input
arc of \(G\), or \(\{x,y\}\in\mathcal F\). In either case, the union digraph
\(H\) contains at least one directed arc whose underlying undirected edge is
\(\{x,y\}\). Since \(x,y\in R_H\), this arc lies in \(H_R\). Write such an arc as
\((a,c)\), where \(\{a,c\}=\{x,y\}\). If \((a,c)\) is \(s\)-backward in \(H\), then both \(a\) and \(c\) lie in \(Z\),
so the edge \(\{x,y\}\) is covered by every bag. Otherwise,
\(
\lambda_{H_R}(c)\ge \lambda_{H_R}(a)\).
On the other hand, since \((a,c)\) is an arc of \(H_R\),
\(
\lambda_{H_R}(c)\le \lambda_{H_R}(a)+1\).
Thus either \(a\) and \(c\) lie in the same BFS layer, or \(a\in L_d\) and
\(c\in L_{d+1}\) for some \(d\). In the same-layer case, both endpoints lie in
\(X_d\). In the one-layer-forward case, both endpoints again lie in \(X_d\).
Hence every edge of \(U[R_H]\) is covered by some bag. Therefore \((X_d)_{d=0}^D\) is a path decomposition of \(U[R_H]\). Finally,
for every \(d\),
\(
|X_d|
\le |L_d|+|L_{d+1}|+|Z|
\le 2b+2\beta.
\)
Thus
\(
\operatorname{pw}(U[R_H])
\le \max_d |X_d|-1
\le 2b+2\beta-1\).

Now let
\(
R_G:=\{v\in V:\operatorname{dist}_G(s,v)<\infty\}\).
If \(t\notin R_G\), then no directed \(s\)-\(t\) path exists in \(G\), and the
instance is trivially a NO-instance. Hence assume \(t\in R_G\).
Let
\(
\mathcal F_G:=\{\{u,v\}\in\mathcal F:u,v\in R_G\},
\)
and consider the restricted instance
\(
I_G:=(G[R_G],s,t,\mathcal F_G)\).
The instances \(I\) and \(I_G\) are equivalent, since every vertex appearing on
a directed \(s\)-\(t\) path in \(G\) is reachable from \(s\) in \(G\), and hence
lies in \(R_G\).

Because \(G\) is a subdigraph of \(H\), every vertex reachable from \(s\) in
\(G\) is also reachable from \(s\) in \(H\). Thus
\(
R_G\subseteq R_H\).
Consequently \(U[R_G]\) is an induced subgraph of \(U[R_H]\). Since
pathwidth is monotone under taking induced subgraphs, we have that
\(
\operatorname{pw}(U[R_G])\le 2b+2\beta-1\).

Finally, \textsc{PAFP} is fixed-parameter tractable parameterized by the
treewidth of the underlying undirected union/constraint graph. This follows
from the bounded-treewidth MSO framework for forbidden-pairs path problems
used by Bodlaender--Jansen--Kratsch~\cite[Prop.~6]{bodlaender13}, with
directed input arcs handled by the standard relational encoding of directed
edge relations~\cite[Sec.~3, Def.~3.1, pp.~314--315]{arnborgLS91}; Appendix~\ref{app:mso} spells out this encoding in detail. Since
\(
\operatorname{tw}(U[R_G])
\le
\operatorname{pw}(U[R_G])
\le
2b+2\beta-1\),
the restricted instance \(I_G\), and therefore also the original instance \(I\),
can be decided in time
\(
f(b+\beta)\cdot |I|^{O(1)}
\)
for some computable function \(f\).
\end{proof}
\subsection{BFS-width-2 normal form}
\subsubsection{Intuition and overview}

The intuition behind our normal form is that we turn an arbitrary DAG \textsc{PAFP} instance into an ``ultra-thin'' ladder from a fresh start vertex $s'$: we build a long directed path of new placeholder vertices (the \emph{spine}) and then give each original reachable vertex a short detour (i.e., a directed path of length two via a fresh intermediate vertex) off a designated spine position; see Figure~\ref{fig:normalize-optionA} for an illustration. In this way, each original vertex is forced to appear at a controlled shortest-path distance from $s'$. These detours are placed at every other spine step so that each detour's intermediate vertex lies on an even BFS layer from $s'$ and the corresponding original vertex lies on the next odd layer. This parity separation prevents multiple off-spine vertices from landing in the same BFS layer, keeping the frontier small.

\noindent\emph{(Distance subtlety)}
The main subtlety for the BFS-width bound is preventing the \emph{core union arcs} from creating unintended \emph{distance-shortcuts}: once we consider the union digraph (i.e., the original arcs of $G$ together with the $\prec_G$-oriented forbidden-pair arcs on the original vertices), either type of core arc could in principle combine with the spine and detours to reach some vertex earlier than its intended detour distance. We avoid this by using a reachability-compatible order $\triangleleft$ of the reachable core as a \emph{scheduling constraint} for the detours. Concretely, we assign vertices to detour slots in reverse-$\triangleleft$ order (so vertices later in $\triangleleft$ are attached closer to $s'$). 

Think of the spine as a timeline: attaching a vertex closer to $s'$ means it becomes reachable earlier. If a union arc $u\to v$ jumps forward by $a$ positions in $\triangleleft$, then we ``wait'' and place the detour for $u$ \emph{after} the detour for $v$ by at least $a$ detour-slots along the spine. Thus $v$ is scheduled $a$ slots closer to $s'$ than $u$, so traversing $u\to v$ can never beat the direct detour into $v$ and cannot create a new shortest-path shortcut.

Equisatisfiability is enforced by ``booby-trapping'' each detour using its intermediate vertex: we add a forbidden pair that makes taking that detour unsafe, except for the unique detour that enters the original start $s$. Consequently, every safe $s'$--$t$ path is forced to follow the spine into $s$ and then continue inside the original instance. Finally, as a bookkeeping step, we insert the reachable-core forbidden-pair arcs oriented forward in the reachable core's order $\triangleleft$ as ordinary arcs of the output graph $G'$. This ensures that forming the union digraph of the output instance introduces no additional arcs (so we may analyze BFS distances directly in $G'$), and it aligns all core arcs with the same order used in the detour scheduling. This step does not change satisfiability, since any path traversing such an inserted arc would contain both endpoints of that forbidden pair and hence be unsafe. 


\begin{figure}[t] 
\centering

\tikzset{
  e/.style={-{Latex[length=2mm]}, thick, draw=black},
  h/.style={-{Latex[length=2mm]}, thick, draw = red},
  g/.style={-{Latex[length=1.4mm,width=1.0mm]}, line width=0.6pt, draw=gray!70},
  f/.style={-{Latex[length=1.4mm,width=1.0mm]}, line width=0.8pt, draw=red, densely dotted},
  normalize/.style={
    >=Latex,
    origV/.style={circle, draw=black, thick, inner sep=1.6pt, font=\small},
    gadV/.style={circle, draw=gray!80, thick, fill=white, minimum size=4.8pt, inner sep=1.2pt, font=\scriptsize},
    detV/.style={circle, draw=gray!70, thick, fill=gray!10, inner sep=1.2pt, font=\scriptsize},
    lab/.style={font=\scriptsize, fill=white, inner sep=1pt}
  }
}

\begin{minipage}{\linewidth}
\centering

\begin{tikzpicture}[normalize]

\node[origV] (s) at (0,-3) {$s$};
\node[origV] (a) at (1.4,-2) {$a$};
\node[origV] (b) at (1.4,-1) {$b$};
\node[origV] (c) at (1.4, 0) {$c$};
\node[origV] (d) at (2.8, 1) {$d$};
\node[origV] (t) at (4.2, 2) {$t$};

\draw[e] (s) to[bend right=20] (a);
\draw[e] (s) -- (b);
\draw[e] (s) to[bend left = 20](c);
\draw[e] (a) to[bend right=20] (d);
\draw[e] (b) -- (d);
\draw[e] (c) to[bend left=20] (d);
\draw[e] (d) -- (t);

\draw[h] (b) to[bend right = 33] (t);

\def\spx{-2.2}
\node[gadV] (sp) at (\spx,2.6) {$s'$};

\foreach \i in {1,...,11} {
  \pgfmathsetmacro{\yy}{2 - 0.5*(\i-1)}
  \node[gadV] (p\i) at (\spx,\yy) {};
}

\begin{scope}[local bounding box=plabelbox]
  \foreach \i in {1,2,3,5,7,9,11} {
    \node[font=\scriptsize, anchor=east] at ($(p\i)+(-0.12,0)$) {$p_{\i}$};
  }
\end{scope}

\draw[g] (sp) -- (p1);
\foreach \j in {1,...,10} {
  \pgfmathtruncatemacro{\k}{\j+1}
  \draw[g] (p\j) -- (p\k);
}

\node[detV] (wt) at (-1.2, 2) {$w_t$};
\node[detV] (wd) at (-1.2, 1) {$w_d$};
\node[detV] (wc) at (-1.2, 0) {$w_c$};
\node[detV] (wb) at (-1.2,-1) {$w_b$};
\node[detV] (wa) at (-1.2,-2) {$w_a$};
\node[detV] (ws) at (-1.2,-3) {$w_s$};

\draw[g] (wt) -- (t);
\draw[g] (wd) -- (d);
\draw[g] (wc) -- (c);
\draw[g] (wb) -- (b);
\draw[g] (wa) -- (a);
\draw[g] (ws) -- (s);

\draw[f] (p1) -- (wt);
\draw[f] (p3) -- (wd);
\draw[f] (p5) -- (wc);
\draw[f] (p7) -- (wb);
\draw[f] (p9) -- (wa);

\draw[g] (p11) -- (ws);

\begin{scope}[on background layer]
  \node[draw=black!45, dashed, rounded corners, inner sep=11pt,
        fit=(s)(a)(b)(c)(d)(t)] (corebox) {};
\end{scope}

\node[lab, anchor=south west] at ($(corebox.north west)+(0,0.06)$)
  {Example instance $I$};

\end{tikzpicture}

\vspace{0.5em}

\begin{tikzpicture}[normalize, font=\scriptsize]
\matrix[
  draw=black!35,
  fill=white,
  rounded corners,
  inner sep=2pt,
  row sep=2pt,
  column sep=14pt,
  nodes={anchor=west, align=left}
] {
  \node {\textbf{gadget}}; & \node {\textbf{original $I$}}; \\
  \node {\tikz[normalize,baseline=-0.6ex]{\draw[g] (0,0) -- (0.70,0);}~arc (spine/attach)}; &
  \node {\tikz[normalize,baseline=-0.6ex]{\draw[e] (0,0) -- (0.70,0);}~arc $E$}; \\
  \node {\tikz[normalize,baseline=-0.6ex]{\draw[f] (0,0) -- (0.70,0);}~forbidden-pair arc}; &
  \node {\tikz[normalize,baseline=-0.6ex]{\draw[h] (0,0) -- (0.70,0);}~forbidden-pair arc}; \\
};
\end{tikzpicture}

\caption{The union digraph \(G'\cup\mathcal F'\) of the normalized instance
\(I'=\operatorname{Normalize}(I)\), taking \(I\) to be the boxed structure.
The reachable-core forbidden-pair arc \(b\to t\) is inserted into the output
graph according to the internal order \(\triangleleft\). Since the detour into
\(t\) leaves the spine at \(p_1\), while the detour into \(b\) leaves later at
\(p_7\), traversing \(b\to t\) cannot create a shorter \(s'\)-to-\(t\) route
than the direct detour into \(t\).}

\label{fig:normalize-optionA}

\end{minipage}
\end{figure}

\subsubsection{The BFS-Width-2 Normal Form Theorem}
\begin{theorem}\label{thm:pl-to-bfsw2}
There exists a polynomial-time computable mapping $\operatorname{Normalize}$ such that for every \textsc{PAFP} instance
$I=(G,s,t,\mathcal F)$ where $G$ is a DAG, the \textsc{PAFP} instance $I' := \operatorname{Normalize}(I) = (G',s',t,\mathcal F')$
satisfies that $G'$ is a DAG and $\mathrm{bfsw}(G' \cup \mathcal F',s') \le 2$, and $I \text{ is a \textsc{YES}-instance} \iff I' \text{ is a \textsc{YES}-instance}$.
\end{theorem}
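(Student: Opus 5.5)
The plan is to make the construction sketched in the overview fully explicit, and then to check three things in turn: acyclicity together with $G'\cup\mathcal F'=G'$, the BFS-width bound, and equivalence of $I$ and $I'$.

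\emph{Preprocessing.} Let $R$ be the set of vertices reachable from $s$ in $G$. If $t\notin R$, output a fixed trivial NO-instance of BFS-width at most $2$, for example $s'\to x$ with $t$ isolated. Otherwise, fix a topological order $\triangleleft$ of $G[R]$ and write it as $v_1\triangleleft\cdots\triangleleft v_n$. Since every vertex of $R$ is reachable from $s$, we have $v_1=s$. Define the core arc set
\[
C := E(G[R])\cup A_\triangleleft(\mathcal F_R), \qquad \mathcal F_R:=\{\{u,v\}\in\mathcal F: u,v\in R\}.
\]
Every arc of $C$ goes forward in $\triangleleft$.

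\emph{Construction.} Build a spine $s'=p_0\to p_1\to\cdots\to p_{2n-1}$. Give vertex $v_i$ the slot $j(i):=2(n-i)+1$, so that later vertices in $\triangleleft$ are attached closer to $s'$ and $s=v_1$ gets the last slot $2n-1$. For each $i$ add a fresh vertex $w_i$ and the detour arcs $p_{j(i)}\to w_i\to v_i$. The arc set of $G'$ is the spine, the detours, and $C$. Set
\[
\mathcal F' := \mathcal F_R\cup\bigl\{\{w_i,p_{j(i)}\}: 2\le i\le n\bigr\}.
\]
This booby trap is chosen deliberately over a pair such as $\{w_v,t\}$, which would add a shortcut arc $w_v\to t$ to the union digraph. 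The pair $\{w_i,p_{j(i)}\}$ instead coincides with an existing arc, and every path through $w_i$ must pass through $p_{j(i)}$.

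\emph{Acyclicity and the union digraph.} $G'$ is a DAG: spine arcs go forward along the spine, detours go from the spine to $w_i$ to the core, and core arcs are forward in $\triangleleft$. No arc leaves the core. Now let $\prec_{G'}$ be any reachability-compatible order. For each core pair $\{u,v\}$ with $u\triangleleft v$, the arc $u\to v$ lies in $G'$, so $u\leadsto_{G'}v$ and, by acyclicity, $v\not\leadsto_{G'} u$. Hence $\prec_{G'}$ orients the pair as $u\to v$, an arc already present. Likewise each pair $\{w_i,p_{j(i)}\}$ is oriented $p_{j(i)}\to w_i$, also already present. Therefore $G'\cup\mathcal F'=G'$, and distances can be computed in $G'$ directly.

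\emph{Distances.} This is the main step. The claim is that
\[
\operatorname{dist}(s',p_j)=j,\qquad \operatorname{dist}(s',w_i)=j(i)+1,\qquad \operatorname{dist}(s',v_i)=j(i)+2.
\]
The upper bounds are immediate. For the lower bounds:
\begin{itemize}
\item Spine vertices are entered only along the spine.
\item $w_i$ is entered only from $p_{j(i)}$.
\item A shortest path to $v_i$ enters the core through some detour into $v_k$ and then follows core arcs, which strictly increase the index. Hence $k\le i$, and the path has length at least $j(k)+2\ge j(i)+2$, because $j$ is decreasing in the index.
\end{itemize}
Since $j(i)$ is odd and distinct slots differ by at least $2$, layer $d$ contains $p_d$ together with at most one off-spine vertex: some $w_i$ if $d$ is even, or some $v_i$ if $d$ is odd. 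Vertices of $G'$ not reachable from $s'$ do not contribute to any layer. So $\operatorname{bfsw}(G'\cup\mathcal F',s')\le 2$.

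\emph{Equivalence.}
\begin{itemize}
\item ($\Leftarrow$) A safe $s'$--$t$ path enters the core through exactly one detour, since there are no arcs back out of the core. If that detour is through $w_i$ with $i\ge 2$, the path also uses $p_{j(i)}$ and is unsafe. So the path is the full spine, then $w_1$, then $s$, followed by a core path to $t$. That core path cannot use an inserted forbidden-pair arc, since this would put both endpoints of a pair of $\mathcal F_R$ on the path. Hence it is a safe $s$--$t$ path of $G$.
\item ($\Rightarrow$) Conversely, prefix a safe $s$--$t$ path of $G$, which lies inside $R$, with the spine and $w_1$. The new pairs involve only vertices $w_i$ with $i\ge 2$, which are not used, so the resulting path is safe.
\end{itemize}
The construction is clearly polynomial. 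The step I expect to need the most care is the distance lower-bound argument, and in particular verifying that no forbidden pair of $\mathcal F'$ introduces an arc of $G'\cup\mathcal F'$ that is not already in $G'$.
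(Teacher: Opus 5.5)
Your proposal is correct and is essentially the paper's construction and proof. Your slot $j(i)=2(n-i)+1$ for $v_i$ is exactly the paper's slot $p_{2i-1}$ for $r_i$ under $r_i=v_{n+1-i}$, and you use the same booby-trap pairs $\{p_{j(i)},w_i\}$ for $v_i\neq s$, the same union-closure argument, the same distance levels, and the same equivalence argument; the only deviation is cosmetic, namely that you short-circuit $t\notin R$ to a fixed trivial NO-instance while the paper builds the gadget anyway and leaves $t$ as an isolated, unreachable vertex.
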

\begin{definition}[Normalization mapping $\operatorname{Normalize}$]\label{def:normalize}
Fix a \textsc{PAFP} instance $I=(G=(V,E),s,t,\mathcal F)$ where $G$ is a DAG.
Define $\operatorname{Normalize}(I)=(G'=(V',E'),\,s',\,t,\,\mathcal F')$ as follows.

\noindent\textbf{Reachable core.}
Let
\begin{equation} V_{\mathrm{rch}} \;\coloneq\; \{v\in V \mid \mathrm{dist}_G(s,v)<\infty\},
\qquad
E_{\mathrm{rch}} \;\coloneq\; E\cap (V_{\mathrm{rch}}\times V_{\mathrm{rch}}), \end{equation}
and
\begin{equation}
\mathcal F_{\mathrm{rch}} \;\coloneq\; \bigl\{\{u,v\}\in \mathcal F \mid u,v\in V_{\mathrm{rch}}\bigr\}.
\end{equation}
Let $G_{\mathrm{rch}}\coloneq (V_{\mathrm{rch}},E_{\mathrm{rch}})$. Fix a deterministic polynomial-time linear-extension routine
\(\mathsf{LE}\) for finite DAGs: on a DAG \(D=(W,A)\), the routine
returns a strict total order \(\mathsf{LE}(D)\) of \(W\) such that
\(u\leadsto_D v \implies u\,\mathsf{LE}(D)\,v\). Let $\triangleleft \coloneq \mathsf{LE}(G_{\mathrm{rch}})$. Set $E_0\coloneq E_{\mathrm{rch}}\cup A_{\triangleleft} (\mathcal F_{\mathrm{rch}})$, where  \(A_\triangleleft(\mathcal F_{\mathrm{rch}})\) is as defined in Section~\ref{sec:def}.

\noindent\textbf{Reverse-$\triangleleft$ enumeration.}
Let $q\coloneq |V_{\mathrm{rch}}|$ and fix an ordering $(r_1,\ldots,r_q)$ of $V_{\mathrm{rch}}$ such that \(r_q\triangleleft r_{q-1}\triangleleft\cdots\triangleleft r_1\).
Note that $q \ge 1$ since $s\in V_{\mathrm{rch}}$.
If $q=1$, then $V_{\mathrm{rch}}=\{s\}$ and (since $s\neq t$) we have $t\notin V_{\mathrm{rch}}$,
so the instance is trivially \textsc{NO}; in this case, index ranges such as $j=1,\ldots,2q-2$ are empty.

\noindent\textbf{Fresh gadget vertices.}
Introduce pairwise distinct vertices
\[
s'\notin V,\qquad p_1,\ldots,p_{2q-1}\notin V\cup\{s'\},\qquad
w_1,\ldots,w_q\notin V\cup\{s'\}\cup\{p_1,\ldots,p_{2q-1}\}.
\]
Let $P\coloneq \{p_1,\ldots,p_{2q-1}\}$ and $W\coloneq \{w_1,\ldots,w_q\}$.

\noindent\textbf{Arcs.}
Define
\begin{equation}
E_{\mathrm{spine}} \;\coloneq\; \{(s',p_1)\}\ \cup\ \{(p_j,p_{j+1}) \mid j=1,\ldots,2q-2\},
\end{equation}
\begin{equation}
E_{\mathrm{att}} \;\coloneq\; \{(p_{2i-1},w_i)\mid i=1,\ldots,q\}\ \cup\ \{(w_i,r_i)\mid i=1,\ldots,q\},
\end{equation}
and set
\begin{equation}
V' \;\coloneq\; V_{\mathrm{rch}} \cup \{t\} \cup \{s'\} \cup P \cup W,
\qquad
E' \;\coloneq\; E_0 \cup E_{\mathrm{spine}} \cup E_{\mathrm{att}}.
\end{equation}

\noindent\textbf{Forbidden pairs.}
Define
\begin{equation}
\mathcal F' \;\coloneq\; \mathcal F_{\mathrm{rch}}
\ \cup\ 
\bigl\{\{p_{2i-1},w_i\} \ \bigm|\ i\in\{1,\ldots,q\}\ \text{and}\ r_i\neq s \bigr\}.
\end{equation}

\noindent Finally, let $G'\coloneq (V',E')$ and output $\operatorname{Normalize}(I)=(G',s',t,\mathcal F')$.
\end{definition}
 \begin{proof}[Proof of Theorem~\ref{thm:pl-to-bfsw2}]

\noindent\emph{Standing notation for $\operatorname{Normalize}$.}
Fix a \textsc{PAFP} instance $I=(G,s,t,\mathcal F)$ where $G = (V,E)$ is a DAG, and let
$I'=(G'=(V',E'),s',t,\mathcal F')\coloneq \operatorname{Normalize}(I)$. Write $V_{\mathrm{rch}},E_{\mathrm{rch}},\mathcal F_{\mathrm{rch}},G_{\mathrm{rch}}$, and $\triangleleft$ for the reachable-core objects from Definition~\ref{def:normalize},
where $E_0 \coloneq E_{\mathrm{rch}}\cup A_{\triangleleft} ( \mathcal{F}_{\mathrm{rch}})$. Let
$(r_1,\ldots,r_q)$ be the fixed reverse-$\triangleleft$ listing of $V_{\mathrm{rch}}$ (with $q\coloneq |V_{\mathrm{rch}}|$),
and let $P=\{p_1,\ldots,p_{2q-1}\}$ and $W=\{w_1,\ldots,w_q\}$ denote the gadget vertices of $G'$.
We also use $E_{\mathrm{spine}}$ and $E_{\mathrm{att}}$ for the spine and attachment arc sets, so that
$E'=E_0\cup E_{\mathrm{spine}}\cup E_{\mathrm{att}}$.

\emph{Output is a valid \textsc{PAFP} instance.}
By Definition~\ref{def:normalize}, the output $I'=(G'=(V',E'),s',t,\mathcal F')$ satisfies
$s',t\in V'$ with $s'\neq t$ (since $s'\notin V$ while $t\in V$),
$E'\subseteq V'\times V'$, and $\mathcal F'\subseteq \binom{V'}{2}$.
Hence $I'$ is a valid \textsc{PAFP} instance. A fully explicit containment check is included in the Appendix in \ref{app:norm-valid}.

\emph{Polynomial-time computability.}
The mapping of Definition~\ref{def:normalize} is computable in deterministic polynomial time.
Indeed, $V_{\mathrm{rch}}$ is obtained by a single reachability search from $s$ in $G$ (e.g. by DFS),
after which $E_{\mathrm{rch}}$ and $\mathcal F_{\mathrm{rch}}$ are computed by one scan of $E$ and $\mathcal F$. The next task of obtaining $\triangleleft$ can, by definition, be effectuated in polynomial time. We then orient each pair in $\mathcal F_{\mathrm{rch}}$ according to $\triangleleft$ to obtain
$A_{\triangleleft} ( \mathcal{F}_{\mathrm{rch}})$ and $E_0=E_{\mathrm{rch}}\cup A_{\triangleleft} (\mathcal F_{\mathrm{rch}})$.
Finally, we create $O(|V_{\mathrm{rch}}|)$ fresh vertices and explicitly list the sets
$E_{\mathrm{spine}},E_{\mathrm{att}}$, and the additional $O(|V_{\mathrm{rch}}|)$ ``booby-trap'' forbidden pairs,
and assemble $(V',E',\mathcal F')$ by set unions.
All of these operations run in time polynomial in $|V|+|E|+|\mathcal F|$. 
\begin{claim}[Acyclicity and union-closure of the output]\label{clm:normalize-dag-union}
In the standing notation above, the digraph \(G'=(V',E')\) is a DAG. Moreover,
with the union digraph defined using the fixed reachability-compatible order
\(\prec_{G'}\), we have
\(
A_{\prec_{G'}}(\mathcal F')\subseteq E'\).
Consequently,
\(
G'\cup\mathcal F'=(V',E')\).
\end{claim}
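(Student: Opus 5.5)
To prove acyclicity, I will exhibit an explicit topological order of \(V'\). Put \(s'\) first, then \(p_1,\dots,p_{2q-1}\) in index order, then \(w_1,\dots,w_q\), and finally the vertices of \(V_{\mathrm{rch}}\) in the order \(\triangleleft\). If \(t\notin V_{\mathrm{rch}}\), append \(t\) at the very end, since it is then an isolated vertex of \(G'\).

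Next I check that every arc of \(E'\) goes forward in this order. Spine arcs go from \(s'\) to \(p_1\) and from \(p_j\) to \(p_{j+1}\). The arcs \((p_{2i-1},w_i)\) go from \(P\) to \(W\). The arcs \((w_i,r_i)\) go from \(W\) to \(V_{\mathrm{rch}}\). For \(E_0\): an arc \((u,v)\in E_{\mathrm{rch}}\) gives \(u\leadsto_{G_{\mathrm{rch}}}v\), so \(u\triangleleft v\) by the defining property of \(\mathsf{LE}\). An arc of \(A_\triangleleft(\mathcal F_{\mathrm{rch}})\) satisfies \(u\triangleleft v\) by definition. Hence \(G'\) is a DAG.

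For union-closure, take any \(\{x,y\}\in\mathcal F'\); I must show that its \(\prec_{G'}\)-orientation lies in \(E'\). Since \(G'\) is a DAG, \(\prec_{G'}\) is a topological order of \(G'\). So whenever one endpoint reaches the other in \(G'\), the orientation is forced: if \(x\leadsto_{G'}y\), then \(y\not\leadsto_{G'}x\) by acyclicity, and hence \(x\prec_{G'}y\). I split into two cases.

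\emph{Booby-trap pairs} \(\{p_{2i-1},w_i\}\). The arc \((p_{2i-1},w_i)\in E'\) gives \(p_{2i-1}\prec_{G'}w_i\), so the oriented arc is exactly this arc of \(E'\).

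\emph{Core pairs} \(\{u,v\}\in\mathcal F_{\mathrm{rch}}\) with \(u\triangleleft v\). Here \((u,v)\in A_\triangleleft(\mathcal F_{\mathrm{rch}})\subseteq E'\), so \(u\leadsto_{G'}v\), and therefore \(u\prec_{G'}v\). The \(\prec_{G'}\)-oriented arc is \((u,v)\), which lies in \(E'\).

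Thus \(A_{\prec_{G'}}(\mathcal F')\subseteq E'\), and so \(G'\cup\mathcal F'=(V',E'\cup A_{\prec_{G'}}(\mathcal F'))=(V',E')\).

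The main subtle point is the core-pair case. We do not know how the arbitrary rule \(G\mapsto\prec_G\) orders incomparable vertices. The argument works around this because \(A_\triangleleft(\mathcal F_{\mathrm{rch}})\) is deliberately inserted into \(E'\), which makes every forbidden pair comparable under reachability in \(G'\). Acyclicity then pins down the orientation, independently of which rule is chosen, consistent with Remark~\ref{rem:order-choice}.
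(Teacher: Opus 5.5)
Your proposal is correct and follows the paper's proof essentially step for step. It uses the same explicit topological order (core block in increasing \(\triangleleft\)-order, with \(t\) appended when \(t\notin V_{\mathrm{rch}}\)). It also uses the same two-case union-closure argument: every pair of \(\mathcal F'\) is already an arc of \(E'\), so acyclicity together with reachability-compatibility forces its \(\prec_{G'}\)-orientation to be that arc.
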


\begin{proof}
\emph{Acyclicity.}
We explicitly construct a topological ordering of \(G'\). Since \(G\) is a DAG,
the reachable-core digraph \(
G_{\mathrm{rch}}=(V_{\mathrm{rch}},E_{\mathrm{rch}})
\)
is also a DAG. By definition, \(\triangleleft=\mathsf{LE}(G_{\mathrm{rch}})\)
is a strict total order extending proper reachability in \(G_{\mathrm{rch}}\).
Thus, for every arc \((u,v)\in E_{\mathrm{rch}}\), we have
\(
u\triangleleft v\).
Moreover, every arc in
\(
A_{\triangleleft}(\mathcal F_{\mathrm{rch}})\)
is oriented forward with respect to \(\triangleleft\) by definition. Hence
\(\triangleleft\) is a topological order of the core digraph
\(
(V_{\mathrm{rch}},E_0) \), for which recall that $E_0=E_{\mathrm{rch}}\cup A_{\triangleleft}(\mathcal F_{\mathrm{rch}})$.

Define the sequence
\(
\tau_0
\coloneq
s',p_1,p_2,\ldots,p_{2q-1},
w_1,w_2,\ldots,w_q,
r_q,r_{q-1},\ldots,r_1\).
If \(t\notin V_{\mathrm{rch}}\), let \(\tau\) be the sequence obtained from
\(\tau_0\) by appending \(t\) at the end. If \(t\in V_{\mathrm{rch}}\), set
\(\tau\coloneq \tau_0\). We verify that every arc of \(E'\) goes forward in
\(\tau\). Recall that
\(
E'=E_0\cup E_{\mathrm{spine}}\cup E_{\mathrm{att}}\).

\noindent
\underline{Spine arcs.}
If \((x,y)\in E_{\mathrm{spine}}\), then either \((x,y)=(s',p_1)\), or
\((x,y)=(p_j,p_{j+1})\) for some \(j\in\{1,\ldots,2q-2\}\). In both cases,
\(\tau\) lists \(x\) before \(y\).

\noindent
\underline{Attachment arcs.}
If \((x,y)\in E_{\mathrm{att}}\), then either
\(
(x,y)=(p_{2i-1},w_i) \)
or
\(
(x,y)=(w_i,r_i) \)
for some \(i\in\{1,\ldots,q\}\). The sequence \(\tau\) lists all spine vertices
before all vertices \(w_i\), and it lists all vertices \(w_i\) before all vertices
of \(V_{\mathrm{rch}}\). Hence every attachment arc goes forward in \(\tau\).

\noindent
\underline{Core arcs.}
If \((x,y)\in E_0\), then \(x,y\in V_{\mathrm{rch}}\), and, by the preceding
paragraph, \(x\triangleleft y\). Because
\(
r_q\triangleleft r_{q-1}\triangleleft\cdots\triangleleft r_1 \),
the core block
\(
r_q,r_{q-1},\ldots,r_1 \)
appears in increasing \(\triangleleft\)-order. Therefore \(\tau\) lists \(x\)
before \(y\).

Thus every arc of \(E'\) goes forward in \(\tau\). Therefore \(\tau\) is a
topological ordering of \(G'\), and \(G'\) is a DAG.

\smallskip

\noindent \emph{Union-closure.}
We now show that forming the union digraph of the output instance adds no new
arcs. That is, we prove that
\(
A_{\prec_{G'}}(\mathcal F')\subseteq E'\).
Let \(\{x,y\}\in\mathcal F'\). By the definition of \(\mathcal F'\), there are
two cases.

\noindent
\emph{Case 1: \(\{x,y\}\in\mathcal F_{\mathrm{rch}}\).}
In the construction of \(A_{\triangleleft}(\mathcal F_{\mathrm{rch}})\), exactly
one of \((x,y)\) and \((y,x)\) belongs to
\(
A_{\triangleleft}(\mathcal F_{\mathrm{rch}})\subseteq E_0\subseteq E'\).
Call this arc \((a,b)\). Since \((a,b)\in E'\), we have
\(
a\leadsto_{G'} b\).
Since \(G'\) is acyclic, we also have
\(
b\not\leadsto_{G'}a\).
The fixed order \(\prec_{G'}\) is reachability-compatible with \(G'\), so it
follows that
\(
a\prec_{G'} b\).
Hence the arc contributed by the forbidden pair \(\{x,y\}\) under the union
operation is exactly \((a,b)\), which already lies in \(E'\).

\noindent
\emph{Case 2: \(\{x,y\}=\{p_{2i-1},w_i\}\) for some
\(i\in\{1,\ldots,q\}\) with \(r_i\neq s\).}
In this case,
\(
(p_{2i-1},w_i)\in E_{\mathrm{att}}\subseteq E'\).
Therefore
\(
p_{2i-1}\leadsto_{G'} w_i\).
Since \(G'\) is acyclic,
\(
w_i\not\leadsto_{G'} p_{2i-1}\).
Reachability-compatibility of \(\prec_{G'}\) gives
\(
p_{2i-1}\prec_{G'} w_i\).
Thus the arc contributed by the forbidden pair \(\{p_{2i-1},w_i\}\) under the
union operation is
\(
(p_{2i-1},w_i)\),
which already lies in \(E'\).

In both cases, the arc contributed by the forbidden pair under
\(\prec_{G'}\) lies in \(E'\). Hence
\(
A_{\prec_{G'}}(\mathcal F')\subseteq E'\).
Therefore
\(
G'\cup\mathcal F'
=
(V',E'\cup A_{\prec_{G'}}(\mathcal F'))
=
(V',E')\).
\qedtri
\end{proof}

\begin{claim}[BFS-width-$2$ via a level function]\label{clm:normalize-bfsw2-level}
In the standing notation above, let
\(
H\coloneq G'\cup \mathcal F'
\)
be the union digraph of the output instance. Then
\(
\operatorname{bfsw}(H,s')\le 2\).

\end{claim}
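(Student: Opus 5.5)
The plan is to exhibit an explicit level function $\ell\colon V'\to\mathbb Z_{\ge 0}\cup\{\infty\}$, show that it equals $\operatorname{dist}_H(s',\cdot)$, and then count vertices per level. By Claim~\ref{clm:normalize-dag-union} we have $H=(V',E')$, so all distances can be computed directly in $G'$ with arc set $E'=E_0\cup E_{\mathrm{spine}}\cup E_{\mathrm{att}}$. I would set
\[
\ell(s')=0,\qquad \ell(p_j)=j,\qquad \ell(w_i)=2i,\qquad \ell(r_i)=2i+1 .
\]
If $t\notin V_{\mathrm{rch}}$, I would put $\ell(t)=\infty$. In that case $t$ is incident to no arc of $E'$: it lies outside $V_{\mathrm{rch}}$, so it touches no arc of $E_0$, and it is not a gadget vertex. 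Hence $t$ is unreachable from $s'$ and lies in no BFS layer.

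\emph{Upper bound.} The path $s',p_1,\ldots,p_j$ has length $j$. The path $s',p_1,\ldots,p_{2i-1},w_i$ has length $2i$, and extending it by the arc $(w_i,r_i)$ gives a path of length $2i+1$ to $r_i$. Therefore $\operatorname{dist}_H(s',v)\le\ell(v)$ for every $v\ne t$ with $t\notin V_{\mathrm{rch}}$.

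\emph{Lower bound.} It suffices to check that $\ell(y)\le\ell(x)+1$ for every arc $(x,y)\in E'$. Then, by induction along any $s'$--$v$ path, $\ell(v)$ is at most the length of that path, so $\operatorname{dist}_H(s',v)\ge\ell(v)$. I would check the arc types in turn:
\begin{itemize}
\item spine arcs raise $\ell$ by exactly $1$;
\item $(p_{2i-1},w_i)$ goes from level $2i-1$ to level $2i$;
\item $(w_i,r_i)$ goes from level $2i$ to level $2i+1$;
\item a core arc $(r_a,r_b)\in E_0$ satisfies $r_a\triangleleft r_b$, as shown in the acyclicity part of Claim~\ref{clm:normalize-dag-union}.
\end{itemize}
For the core arcs, the reverse-$\triangleleft$ enumeration $r_q\triangleleft\cdots\triangleleft r_1$ turns $r_a\triangleleft r_b$ into $b<a$, so $\ell(r_b)=2b+1<2a+1=\ell(r_a)$. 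This last case is the main obstacle, since it is exactly where a core arc (an input arc or an inserted forbidden-pair arc) might create a shortcut. The scheduling choice makes every core arc strictly decrease $\ell$, so core arcs can never shorten distances.

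\emph{Counting layers.}
\begin{itemize}
\item Layer $0$ is $\{s'\}$.
\item An even layer $2i$ with $1\le i\le q$ consists of $w_i$ together with $p_{2i}$, the latter only if $2i\le 2q-1$.
\item An odd layer $2i-1$ with $1\le i\le q$ consists of $p_{2i-1}$ together with $r_{i-1}$, the latter only if $i\ge 2$.
\item Layer $2q+1$ is $\{r_q\}$.
\end{itemize}
No other vertex has finite level, so every layer has size at most $2$. Together with $\ell=\operatorname{dist}_H(s',\cdot)$, this gives $\operatorname{bfsw}(H,s')\le 2$. The degenerate case $q=1$ gives only the layers $\{s'\},\{p_1\},\{w_1\},\{r_1\}$, so it is covered as well.
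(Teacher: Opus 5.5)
Your proof is correct. It uses the same level function, the same explicit upper-bound paths and the same layer count as the paper. The genuine difference is how you prove $\operatorname{dist}_H(s',v)\ge\ell(v)$.

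The paper argues structurally. Each spine vertex and each detour vertex has a unique in-neighbor, which gives the bound by induction along the spine. For a core vertex $r_i$, the paper takes the first core vertex $r_k$ on an arbitrary $s'$--$r_i$ path. That vertex must be entered via $w_k$, which already costs $2k+1$ arcs. The path can never return to the gadget, and core arcs strictly decrease the index, so $k\ge i$.

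You instead use a potential (dual-certificate) argument: $\ell(s')=0$ and every arc of $E'$ satisfies $\ell(y)\le\ell(x)+1$, so $\ell$ bounds distance from below. This is shorter and uniform. Its only nontrivial case, the core arcs, rests on exactly the fact the paper uses, namely that $r_a\triangleleft r_b$ implies $a>b$. Your arc-by-arc check also essentially proves Lemma~\ref{lem:normal-form-backward-arcs}: spine and attachment arcs advance exactly one level and core arcs strictly decrease it.

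The paper's route instead exposes the path structure it needs again in Claim~\ref{clm:normalize-equisat}: every path enters the core only through an attachment arc and never returns to the gadget. Your treatment of the unreachable-target case (an isolated $t$) and of the degenerate case $q=1$ is also fine.
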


\begin{proof}
By Claim~\ref{clm:normalize-dag-union}, forming the union digraph of the
output instance adds no arcs, and hence
\(
H=(V',E')\).

\noindent
\emph{Level function.}
Define
\(
\lambda:V'\to \mathbb Z_{\ge 0}\cup\{\infty\}
\)
as follows:
\[
\lambda(s')=0,\qquad
\lambda(p_j)=j\quad (1\le j\le 2q-1),
\]
and
\[
\lambda(w_i)=2i,\qquad
\lambda(r_i)=2i+1\quad (1\le i\le q).
\]
If \(t\notin V_{\mathrm{rch}}\), set
\(
\lambda(t)=\infty\).
We prove that
\(
\operatorname{dist}_H(s',v)=\lambda(v)
\)
for every \(v\in V'\), interpreting \(\operatorname{dist}_H(s',t)=\infty\)
when \(t\notin V_{\mathrm{rch}}\). The equality is immediate for \(s'\).

\smallskip
\noindent
\emph{Spine vertices.}
Fix \(j\in\{1,\ldots,2q-1\}\). The directed spine path
\(
s'\to p_1\to p_2\to\cdots\to p_j
\)
has length \(j\), so
\(
\operatorname{dist}_H(s',p_j)\le j\).
Conversely, \(p_1\) has unique in-neighbor \(s'\), and for \(j\ge 2\), the
vertex \(p_j\) has unique in-neighbor \(p_{j-1}\). Therefore every directed
\(s'\)-to-\(p_j\) path must enter the spine vertices in order, and an induction
on \(j\) gives
\(
\operatorname{dist}_H(s',p_j)\ge j\).
Thus
\(
\operatorname{dist}_H(s',p_j)=j=\lambda(p_j)\).

\smallskip
\noindent
\emph{Detour vertices.}
Fix \(i\in\{1,\ldots,q\}\). The path
\(
s'\to p_1\to\cdots\to p_{2i-1}\to w_i
\)
has length \(2i\), so
\(
\operatorname{dist}_H(s',w_i)\le 2i\).
Moreover, \(w_i\) has unique in-neighbor \(p_{2i-1}\). Hence any directed
\(s'\)-to-\(w_i\) path must first reach \(p_{2i-1}\) and then use the arc
\((p_{2i-1},w_i)\). By the spine-vertex calculation,
\(
\operatorname{dist}_H(s',w_i)
\ge
\operatorname{dist}_H(s',p_{2i-1})+1
=
(2i-1)+1
=
2i\).
Therefore
\(
\operatorname{dist}_H(s',w_i)=2i=\lambda(w_i)\).

\smallskip
\noindent
\emph{Core vertices.}
Fix \(i\in\{1,\ldots,q\}\), and consider the core vertex \(r_i\). The path
\(
s'\to p_1\to\cdots\to p_{2i-1}\to w_i\to r_i
\)
has length \(2i+1\), so
\(
\operatorname{dist}_H(s',r_i)\le 2i+1\).

For the reverse inequality, let \(Q\) be any directed \(s'\)-to-\(r_i\) path in
\(H\), and let \(r_k\) be the first vertex of \(V_{\mathrm{rch}}\) appearing on
\(Q\). By construction, the only arcs of \(E'\) whose head lies in
\(V_{\mathrm{rch}}\) and whose tail lies outside \(V_{\mathrm{rch}}\) are the
attachment arcs
\(
(w_\ell,r_\ell) \) for  \(\ell=1,\ldots,q\). Because \(H=(V',E')\), the predecessor of \(r_k\) on \(Q\) must therefore be
\(w_k\). Using the detour-vertex calculation, we get
\(
|E(Q)|\ge \operatorname{dist}_H(s',w_k)+1=2k+1\).

After \(Q\) first reaches \(r_k\), it cannot visit any gadget vertex, because
there is no arc from \(V_{\mathrm{rch}}\) to
\(
\{s'\}\cup P\cup W\).
Indeed,
\(
E_0\subseteq V_{\mathrm{rch}}\times V_{\mathrm{rch}},
\)
while every arc in \(E_{\mathrm{spine}}\cup E_{\mathrm{att}}\) has its tail in
\(\{s'\}\cup P\cup W\). Hence the suffix of \(Q\) from \(r_k\) to \(r_i\) lies
entirely in the core digraph
\(
(V_{\mathrm{rch}},E_0)\).

We now show that this is possible only when \(k\ge i\). Recall that
\(
(r_1,\ldots,r_q)
\)
is the reverse-\(\triangleleft\) listing of \(V_{\mathrm{rch}}\), so
\(
r_q\triangleleft r_{q-1}\triangleleft\cdots\triangleleft r_1\).
Also, \(\triangleleft\) is a topological order of the core digraph
\(
(V_{\mathrm{rch}},E_0)\).
Indeed, every arc of \(E_{\mathrm{rch}}\) goes forward with respect to
\(\triangleleft\), since \(\triangleleft=\mathsf{LE}(G_{\mathrm{rch}})\) extends
proper reachability in \(G_{\mathrm{rch}}\); and every arc of
\(A_{\triangleleft}(\mathcal F_{\mathrm{rch}})\) goes forward with respect to
\(\triangleleft\) by definition.

Thus, if \((r_a,r_b)\in E_0\), then
\(
r_a\triangleleft r_b\).
Because the list \((r_1,\ldots,r_q)\) is in decreasing \(\triangleleft\)-order,
this implies
\(
a>b\).
Consequently, along any directed path
\(
r_{i_0}\to r_{i_1}\to\cdots\to r_{i_\ell}
\)
in \((V_{\mathrm{rch}},E_0)\), the indices strictly decrease:
\(
i_0>i_1>\cdots>i_\ell\).
Therefore the suffix of \(Q\) can reach \(r_i\) from \(r_k\) only if
\(
k\ge i\).
It follows that
\(
|E(Q)|\ge 2k+1\ge 2i+1=\lambda(r_i)\).
Since \(Q\) was arbitrary,
\(
\operatorname{dist}_H(s',r_i)\ge 2i+1\).
Together with the explicit detour path, this gives
\(
\operatorname{dist}_H(s',r_i)=2i+1=\lambda(r_i)\).

\smallskip
\noindent
\emph{The unreachable target case.}
If \(t\notin V_{\mathrm{rch}}\), then \(t\) is not equal to any \(r_i\). In this
case \(t\) has no incoming arc in \(E'\): all arcs in \(E_0\) have both endpoints
in \(V_{\mathrm{rch}}\), all attachment arcs into the core enter some \(r_i\),
and no spine arc enters \(t\). Hence \(t\) is unreachable from \(s'\) in \(H\),
and therefore
\(
\operatorname{dist}_H(s',t)=\infty=\lambda(t)\).

We have now proved
\(
\operatorname{dist}_H(s',v)=\lambda(v) \)
for every \(v\in V'\).

\smallskip
\noindent
\emph{BFS-width bound.}
For every \(d\ge 0\),
\(
L_H(s',d)
=
\{v\in V':\operatorname{dist}_H(s',v)=d\}
=
\{v\in V':\lambda(v)=d\}\).
By the definition of \(\lambda\), each finite level contains at most one spine
vertex: namely \(p_d\) when \(1\le d\le 2q-1\), and \(s'\) when \(d=0\).
Each finite level also contains at most one non-spine vertex: namely \(w_{d/2}\)
when \(d\) is even and \(d/2\in\{1,\ldots,q\}\), or
\(r_{(d-1)/2}\) when \(d\) is odd and \((d-1)/2\in\{1,\ldots,q\}\).
Vertices assigned level \(\infty\), if any, do not belong to any finite BFS layer.

Therefore
\(
|L_H(s',d)|\le 2
\)
for every \(d\ge 0\), and hence
\(
\operatorname{bfsw}(H,s')\le 2\).
Because \(H=G'\cup\mathcal F'\), this proves the claim.
\qedtri
\end{proof} 
\begin{claim}[Equisatisfiability of $\operatorname{Normalize}$]\label{clm:normalize-equisat}
In the standing notation above, the original instance $I=(G,s,t,\mathcal F)$ is a \textsc{YES}-instance
if and only if the normalized instance $I'=(G',s',t,\mathcal F')=\operatorname{Normalize}(I)$ is a
\textsc{YES}-instance.
\end{claim}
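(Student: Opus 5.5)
We prove the two directions separately. In both, the key facts are the booby-trap pairs \(\{p_{2i-1},w_i\}\) for \(r_i\neq s\), and the absence of arcs from \(V_{\mathrm{rch}}\) back to the gadget vertices \(\{s'\}\cup P\cup W\), already noted in the proof of Claim~\ref{clm:normalize-bfsw2-level}.

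\emph{Forward direction.} Let \(P_0\) be a safe \(s\)--\(t\) path in \(G\). Every vertex of \(P_0\) is reachable from \(s\), so \(P_0\) lies in \(G_{\mathrm{rch}}\) and uses only arcs of \(E_{\mathrm{rch}}\subseteq E'\); in particular \(t\in V_{\mathrm{rch}}\). Let \(i_s\) be the index with \(r_{i_s}=s\). The candidate path is
\[
Q:=s'\to p_1\to\cdots\to p_{2i_s-1}\to w_{i_s}\to s,
\]
followed by \(P_0\). It is simple because the gadget prefix and \(P_0\subseteq V_{\mathrm{rch}}\) are disjoint. For safety, the pairs of \(\mathcal F'\) come in two kinds:
\begin{itemize}
\item pairs in \(\mathcal F_{\mathrm{rch}}\subseteq\mathcal F\) have both endpoints in \(V_{\mathrm{rch}}\), so only vertices of \(P_0\) can meet them, and \(P_0\) is safe;
\item a booby-trap pair \(\{p_{2i-1},w_i\}\) has \(r_i\neq s\), hence \(i\neq i_s\), and \(w_i\) is not on \(Q\) because \(Q\) contains no \(w\)-vertex other than \(w_{i_s}\).
\end{itemize}

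\emph{Backward direction.} Let \(Q\) be a safe \(s'\)--\(t\) path in \(G'\).
\begin{enumerate}
\item Since \(t\in V'\) is reachable from \(s'\), the unreachable-target case of Claim~\ref{clm:normalize-bfsw2-level} gives \(t\in V_{\mathrm{rch}}\).
\item Let \(r_k\) be the first core vertex on \(Q\). As in Claim~\ref{clm:normalize-bfsw2-level}, its predecessor must be \(w_k\), and the predecessor of \(w_k\) must be \(p_{2k-1}\), its unique in-neighbour.
\item So \(Q\) contains both \(p_{2k-1}\) and \(w_k\). By safety this pair is not in \(\mathcal F'\), which forces \(r_k=s\).
\item Since no arc leaves \(V_{\mathrm{rch}}\) toward gadget vertices, the suffix \(Q'\) of \(Q\) from \(s\) to \(t\) lies entirely in \((V_{\mathrm{rch}},E_0)\).
\end{enumerate}

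\emph{Main obstacle.} The step that needs care is that \(Q'\) might use inserted arcs of \(A_{\triangleleft}(\mathcal F_{\mathrm{rch}})\), which are not arcs of \(G\). Suppose \(Q'\) traverses an arc \((u,v)\in A_{\triangleleft}(\mathcal F_{\mathrm{rch}})\). Then \(\{u,v\}\in\mathcal F_{\mathrm{rch}}\subseteq\mathcal F'\) and both \(u,v\) lie on \(Q\), contradicting safety. Hence \(Q'\) uses only arcs of \(E_{\mathrm{rch}}\subseteq E\), so it is a directed \(s\)--\(t\) path in \(G\). It is also safe for \(\mathcal F\): any pair of \(\mathcal F\) with both endpoints on \(Q'\) lies in \(\mathcal F_{\mathrm{rch}}\subseteq\mathcal F'\), which again contradicts the safety of \(Q\). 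This completes the equivalence.
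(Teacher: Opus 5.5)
Your proof is correct and follows essentially the same route as the paper's. The forward direction uses a gadget prefix followed by the original safe path. The reverse direction uses the same first-core-vertex argument: the booby-trap pair forces entry at $s$, the suffix is confined to $(V_{\mathrm{rch}},E_0)$, and safety rules out the inserted arcs of $A_{\triangleleft}(\mathcal F_{\mathrm{rch}})$. The only difference is cosmetic: you route the prefix through the index $i_s$ with $r_{i_s}=s$, whereas the paper first proves that $s$ is $\triangleleft$-minimal, so that $r_q=s$; your choice is a harmless simplification.
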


\begin{proof}
\textbf{Trivial unreachable case.}
If $t\notin V_{\mathrm{rch}}$, then there is no directed $s$--$t$ path in $G$, so $I$ is a \textsc{NO}-instance.
In this case, $t$ has no incoming arc in $G'$ (it is not among the $r_i$, and no gadget arc targets $t$),
so $t$ is unreachable from $s'$ in $G'$ and $I'$ is also \textsc{NO}.
Hence assume from now on that $t\in V_{\mathrm{rch}}$.

\noindent\textbf{($\Rightarrow$) Forward direction.}
Let $P$ be a safe directed $s$--$t$ path in $G$.
Every vertex on $P$ lies in $V_{\mathrm{rch}}$, and every arc of $P$ lies in
$E_{\mathrm{rch}}\subseteq E_0\subseteq E'$, so $P$ is also a directed path in $G'$. For every \(v\in V_{\mathrm{rch}}\setminus\{s\}\), we have
\(
s\leadsto_{G_{\mathrm{rch}}} v\).
Since \(G_{\mathrm{rch}}\) is a DAG, we also have
\(
v\not\leadsto_{G_{\mathrm{rch}}} s\).
Because \(\triangleleft=\mathsf{LE}(G_{\mathrm{rch}})\) extends proper
reachability in \(G_{\mathrm{rch}}\), it follows that
\(
s\triangleleft v
\)
for every \(v\in V_{\mathrm{rch}}\setminus\{s\}\). Thus \(s\) is first in the
\(\triangleleft\)-order. Since \((r_1,\ldots,r_q)\) is the reverse-\(\triangleleft\)
enumeration, we have
\(
r_q=s\). Now define $P_{\mathrm{pref}} \coloneq 
s'\to p_1\to p_2\to \cdots \to p_{2q-1}\to w_{q}\to r_{q}(=s)$, a (directed) gadget prefix which uses only arcs of $E_{\mathrm{spine}}\cup E_{\mathrm{att}}\subseteq E'$.
Let $Q$ be the directed $s'$--$t$ path obtained by following $P_{\mathrm{pref}}$ and then appending the suffix of $P$ from $s$ to $t$ (starting with the successor of $s$ on $P$).

\emph{Safety.}
The only forbidden pairs in $\mathcal F'$ are:
(i) the original pairs $\mathcal F_{\mathrm{rch}}$, whose endpoints lie in $V_{\mathrm{rch}}$, and
(ii) the booby-trap pairs $\{p_{2i-1},w_i\}$ for indices $i$ with $r_i\neq s$.
The core vertices of $Q$ are exactly the vertices of $P$, so $Q$ is safe with respect to $\mathcal F_{\mathrm{rch}}$
because $P$ is safe with respect to $\mathcal F$.
Moreover, $Q$ uses exactly one detour vertex, namely $w_q$, and the pair $\{p_{2q-1},w_{q}\}$
is \emph{not} in $\mathcal F'$ because $r_{q}=s$.
Thus $Q$ contains at most one endpoint of every booby-trap pair as well.
Hence $Q$ is safe, and therefore $I'$ is a \textsc{YES}-instance.

\noindent\textbf{($\Leftarrow$) Reverse direction.}
Let $Q$ be a safe directed $s'$--$t$ path in $G'$.
Since $t\in V_{\mathrm{rch}}$ and $Q$ ends at $t$, the path $Q$ must enter $V_{\mathrm{rch}}$.
Let $r_k$ be the first vertex of $V_{\mathrm{rch}}$ that appears on $Q$. By the definition of $E'=E_0\cup E_{\mathrm{spine}}\cup E_{\mathrm{att}}$, the only arcs with head in $V_{\mathrm{rch}}$
and tail outside $V_{\mathrm{rch}}$ are the attachment arcs $w_i\to r_i$.
Therefore the predecessor of $r_k$ on $Q$ must be $w_k$.
Also, the only arc entering $w_k$ is $p_{2k-1}\to w_k$, so $Q$ contains both $p_{2k-1}$ and $w_k$. If $r_k\neq s$, then by definition of $\mathcal F'$ the booby-trap pair $\{p_{2k-1},w_k\}$ belongs to $\mathcal F'$,
contradicting that $Q$ is safe.
Hence necessarily $r_k=s$: every safe $s'$--$t$ path enters the core at $s$. After reaching $s$, the path $Q$ cannot return to gadget vertices, because there is no arc in $E'$
from $V_{\mathrm{rch}}$ to $\{s'\}\cup P\cup W$ (indeed, $E_0\subseteq V_{\mathrm{rch}}\times V_{\mathrm{rch}}$ and
$E_{\mathrm{spine}}\cup E_{\mathrm{att}}$ has tail in $\{s'\}\cup P\cup W$).
Thus the suffix of $Q$ from $s$ to $t$ is a directed path entirely inside $(V_{\mathrm{rch}},E_0)$. Finally, this suffix cannot use any arc from
\(A_{\triangleleft}(\mathcal F_{\mathrm{rch}})\subseteq E_0\).
Indeed, if it used an arc
\(
(u,v)\in A_{\triangleleft}(\mathcal F_{\mathrm{rch}})
\),
then $\{u,v\}\in \mathcal F_{\mathrm{rch}}\subseteq \mathcal F'$,
and the path would contain both endpoints \(u\) and \(v\), contradicting
safety.
Therefore every arc on the suffix lies in $E_{\mathrm{rch}}\subseteq E$, so the suffix is a directed $s$--$t$ path in $G$.

\emph{Safety in the original instance.}
Any forbidden pair $\{a,b\}\in\mathcal F$ whose endpoints both appear on this $s$--$t$ path must satisfy
$a,b\in V_{\mathrm{rch}}$, hence $\{a,b\}\in\mathcal F_{\mathrm{rch}}$.
Since $Q$ is safe with respect to $\mathcal F'$, it is safe with respect to $\mathcal F_{\mathrm{rch}}$,
so the extracted $s$--$t$ path in $G$ is safe for $\mathcal F$ as well.
Thus $I$ is a \textsc{YES}-instance. \qedtri
\end{proof} 
Taken together, this proves Theorem~\ref{thm:pl-to-bfsw2}. \qedbox
\end{proof}

\subsection{\textsf{NP}-completeness at union BFS-width 2}

\begin{theorem}[\textsf{NP}-completeness on BFS-Width-$2$ Union DAGs]
\label{thm:bfsw2-union-npc}
\textsc{PAFP} is \textsf{NP}-complete even when restricted to instances
$I=(G=(V,E),s,t,\mathcal F)$ such that $G$ is a DAG and $\mathrm{bfsw}\bigl(G \cup \mathcal F,\, s\bigr)\ \le\ 2$.
\end{theorem}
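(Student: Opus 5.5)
The plan is to combine membership in \textsf{NP} with a polynomial-time reduction from \textsc{PAFP} on DAGs, which is \textsf{NP}-complete by Gabow et al.~\cite{gabow76}, through the mapping $\operatorname{Normalize}$ of Theorem~\ref{thm:pl-to-bfsw2}.

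\emph{Membership.} Membership in \textsf{NP} is immediate for the restricted problem just as for general \textsc{PAFP}. A certificate is a vertex sequence of length at most $|V|$. We check in polynomial time that it is a directed simple $s$--$t$ path in $G$ and that no pair of $\mathcal F$ has both endpoints on it. The promise that $G$ is a DAG with $\mathrm{bfsw}(G\cup\mathcal F,s)\le 2$ can itself be checked in polynomial time: compute $\prec_G$, form $G\cup\mathcal F$, and run one BFS from $s$. Only the first step needs the fixed rule $G\mapsto\prec_G$ to be polynomial-time computable; if that is not granted, membership can instead be read in the promise sense. Either way the restricted problem lies in \textsf{NP}.

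\emph{Hardness.} Given an arbitrary DAG instance $I=(G,s,t,\mathcal F)$, output $I'=\operatorname{Normalize}(I)=(G',s',t,\mathcal F')$. By Theorem~\ref{thm:pl-to-bfsw2}, this map is polynomial-time computable. Its output $G'$ is a DAG with $\mathrm{bfsw}(G'\cup\mathcal F',s')\le 2$, so $I'$ lies in the restricted class with root $s'$. The same theorem also gives that $I$ is a \textsc{YES}-instance iff $I'$ is. Hence \textsf{NP}-hardness of \textsc{PAFP} on DAGs transfers to the restricted class.

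\emph{Remaining checks.} Two points need a small amount of care, and neither should be a real obstacle. First, the restricted class is stated for instances whose source terminal plays the role of $s$; in $I'$ that terminal is $s'$, and the BFS-width bound in Claim~\ref{clm:normalize-bfsw2-level} is measured from $s'$, so the two match. Second, the bound must be about the union digraph $G'\cup\mathcal F'$ formed with the paper's fixed order $\prec_{G'}$, not with the internal order $\triangleleft$. Claim~\ref{clm:normalize-dag-union} handles this: it shows that $G'\cup\mathcal F'=(V',E')$ for any reachability-compatible rule, in line with Remark~\ref{rem:order-choice}. The only external ingredient is that the cited hardness of Gabow et al.\ holds for acyclic input digraphs, as recalled in the introduction.
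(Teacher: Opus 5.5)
Your proposal is correct and follows the paper's proof exactly. Membership in \textsf{NP} is immediate, and hardness comes from applying $\operatorname{Normalize}$ to DAG instances of \textsc{PAFP} and invoking Theorem~\ref{thm:pl-to-bfsw2} for polynomial-time computability, equisatisfiability, acyclicity of $G'$, and union BFS-width at most $2$. Your extra remarks are accurate refinements that the paper leaves implicit: that the root is $s'$, that the union digraph is formed with $\prec_{G'}$ (handled by Claim~\ref{clm:normalize-dag-union}), and that the promise is checkable when the order rule is polynomial-time computable.
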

\begin{proof}
Membership in \textsf{NP} is immediate. For \textsf{NP}-hardness, reduce from \textsc{PAFP} on DAGs, which is \textsf{NP}-complete \cite{gabow76,kolman09}. Given an arbitrary
instance \(I=(G,s,t,\mathcal F)\) with \(G\) a DAG, output
\(
I' := \operatorname{Normalize}(I)\).
By Theorem~\ref{thm:pl-to-bfsw2}, \(I\) is a \textsc{YES}-instance if and only if
\(I'\) is a \textsc{YES}-instance, and the union digraph of \(I'\) is a DAG of
BFS-width at most \(2\). Since \(\operatorname{Normalize}\) is polynomial-time
computable, this is a polynomial-time many-one reduction to the stated class. \qedbox
\end{proof}

\subsection{The normal form uses unbounded backward arcs}
\label{subsec:normal-form-backward-arcs}

The preceding hardness result should be compared with
Theorem~\ref{thm:bfs-backward-fpt}. That theorem gives tractability when the
union digraph has bounded BFS-width and only boundedly many backward arcs.
The normal form shows why the second hypothesis is necessary: the construction
keeps the BFS layers thin precisely by placing the reachable-core structure into
backward arcs.

We make this precise in the following observation.

\begin{lemma}[Backward arcs in the normal form]
\label{lem:normal-form-backward-arcs}
Let
\(
I'=\operatorname{Normalize}(I)=(G'=(V',E'),s',t,\mathcal F')
\)
be the normalized instance constructed in Definition~\ref{def:normalize}, and
let
\(
H:=G'\cup\mathcal F'
\)
be its union digraph. In the notation of the proof of
Theorem~\ref{thm:pl-to-bfsw2}, every arc of the reachable core
\(
E_0=E_{\mathrm{rch}}\cup A_{\triangleleft}(\mathcal F_{\mathrm{rch}})
\)
is \(s'\)-backward in \(H\). Moreover,
\(
B^-_H(s')=E_0\).
Consequently,
\(
|B^-_H(s')|=|E_0|\).
\end{lemma}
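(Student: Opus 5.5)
The plan is to reuse the exact distance computation from the proof of Claim~\ref{clm:normalize-bfsw2-level}. By Claim~\ref{clm:normalize-dag-union} we have \(H=(V',E')\), and that proof showed \(\operatorname{dist}_H(s',v)=\lambda(v)\) for every \(v\in V'\). In particular, every core vertex \(r_i\) is reachable with \(\lambda_H(r_i)=2i+1\), and every gadget vertex \(s',p_j,w_i\) is reachable. If \(t\notin V_{\mathrm{rch}}\), then \(t\) is unreachable and is therefore irrelevant to \(B^-_H(s')\), which only considers arcs of \(H[R_H(s')]\).

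\emph{Step 1: every arc of \(E_0\) is \(s'\)-backward.} Take \((r_a,r_b)\in E_0\). Both endpoints lie in \(V_{\mathrm{rch}}\subseteq R_H(s')\). As shown in the proof of Claim~\ref{clm:normalize-bfsw2-level}, \(\triangleleft\) is a topological order of \((V_{\mathrm{rch}},E_0)\), so \(r_a\triangleleft r_b\). Since the listing \((r_1,\ldots,r_q)\) is in decreasing \(\triangleleft\)-order, this gives \(a>b\). Hence \(\lambda_H(r_b)=2b+1<2a+1=\lambda_H(r_a)\), so the arc is \(s'\)-backward. This proves \(E_0\subseteq B^-_H(s')\).

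\emph{Step 2: no other arc is backward.} We have \(E'=E_0\cup E_{\mathrm{spine}}\cup E_{\mathrm{att}}\), so it suffices to check the two gadget arc sets using \(\lambda\).
\begin{itemize}
\item Spine arcs go from level \(j\) to level \(j+1\); for \((s',p_1)\) this is from \(0\) to \(1\).
\item An arc \((p_{2i-1},w_i)\) goes from level \(2i-1\) to level \(2i\).
\item An arc \((w_i,r_i)\) goes from level \(2i\) to level \(2i+1\).
\end{itemize}
All of these strictly increase the level, so none is backward, and therefore \(B^-_H(s')=E_0\). The cardinality statement is then immediate.

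The only point that needs care is that \(E_0\) is treated as a set of arcs, so the equality \(|B^-_H(s')|=|E_0|\) is literal. Overlap between \(E_{\mathrm{rch}}\) and \(A_\triangleleft(\mathcal F_{\mathrm{rch}})\) is harmless because both are subsets of the same arc set. Beyond this, I expect no real obstacle, since everything reduces to the already established level function and the index monotonicity of core arcs.
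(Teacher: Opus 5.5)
Your proposal is correct and follows essentially the same route as the paper. Like the paper, you use Claim~\ref{clm:normalize-dag-union} to get $H=(V',E')$, use the level function $\lambda$ from Claim~\ref{clm:normalize-bfsw2-level} with the reverse-$\triangleleft$ indexing to show each core arc $(r_a,r_b)$ has $a>b$ and hence strictly decreases the level, and check that every spine and attachment arc advances exactly one level. Your remark that an unreachable $t$ is irrelevant is a harmless detail the paper leaves implicit.
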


\begin{proof}
By Claim~\ref{clm:normalize-dag-union}, forming the union digraph of the
normalized instance adds no arcs. Hence
\(
H=(V',E')\).
By Claim~\ref{clm:normalize-bfsw2-level}, the BFS distances in \(H\) from
\(s'\) are given by the level function
\[
\lambda(s')=0,\qquad
\lambda(p_j)=j,\qquad
\lambda(w_i)=2i,\qquad
\lambda(r_i)=2i+1.
\]

First consider a core arc
\(
(r_a,r_b)\in E_0\).
The order \(\triangleleft\) is a topological order of
$(V_{\mathrm{rch}},E_0)$,
so
$r_a\triangleleft r_b$.
Since \((r_1,\ldots,r_q)\) is the reverse-\(\triangleleft\) listing of
\(V_{\mathrm{rch}}\), this implies
$a>b$.
Therefore
$\lambda(r_b)=2b+1<2a+1=\lambda(r_a)$.
Thus \((r_a,r_b)\) is \(s'\)-backward in \(H\). Hence
$E_0\subseteq B^-_H(s')$.

It remains to see that there are no other \(s'\)-backward arcs. Since
$E'=E_0\cup E_{\mathrm{spine}}\cup E_{\mathrm{att}}$,
it suffices to inspect the spine and attachment arcs. The spine arcs satisfy $\lambda(p_1)=\lambda(s')+1$
for the arc \((s',p_1)\), and
$\lambda(p_{j+1})=\lambda(p_j)+1$
for every arc \((p_j,p_{j+1})\). The attachment arcs also advance one level:
$\lambda(w_i)=\lambda(p_{2i-1})+1$
for every arc \((p_{2i-1},w_i)\), and
$\lambda(r_i)=\lambda(w_i)+1$
for every arc \((w_i,r_i)\). Thus no arc in
\(E_{\mathrm{spine}}\cup E_{\mathrm{att}}\) is \(s'\)-backward. Therefore
$B^-_H(s')=E_0$,
as claimed. \qedbox
\end{proof}

This identifies exactly where the hardness in the BFS-width-\(2\) normal form
resides. The spine and attachment arcs are layer-progressing: each moves from
one BFS layer to the next. The reachable-core arcs, by contrast, all move from a
later detour position to an earlier detour position. Thus the construction
achieves BFS-width at most \(2\) by making the original reachable-core behavior
invisible to the frontier size and encoding it instead in backward movement
through the BFS layering.

The number of such backward arcs is not bounded in the normal form. Indeed,
for the original reachable core \(G_{\mathrm{rch}}=(V_{\mathrm{rch}},
E_{\mathrm{rch}})\), every vertex of \(V_{\mathrm{rch}}\setminus\{s\}\) is
reachable from \(s\) inside \(G_{\mathrm{rch}}\). Hence, if
$q:=|V_{\mathrm{rch}}|$,
then \(E_{\mathrm{rch}}\) contains at least \(q-1\) arcs: for each
\(v\in V_{\mathrm{rch}}\setminus\{s\}\), choose one predecessor of \(v\) on a
directed \(s\)-to-\(v\) path. These chosen arcs have distinct heads, so they are
distinct. Since
\(
E_{\mathrm{rch}}\subseteq E_0=B^-_H(s')\),
we obtain
\(
|B^-_H(s')|\ge |E_{\mathrm{rch}}|\ge q-1\).
Thus, as the reachable core grows, the normalized instances have unboundedly
many backward arcs even though their union digraphs have BFS-width at most
\(2\).

This explains the role of the backward-arc hypothesis in
Theorem~\ref{thm:bfs-backward-fpt}. Bounded BFS-width alone cannot imply
tractability for unrestricted \textsc{PAFP}: by
Theorem~\ref{thm:bfsw2-union-npc}, the problem remains \textsf{NP}-complete
even when the union digraph has BFS-width at most \(2\). The positive theorem
is therefore not contradicted by the normal form, because the hard normalized
instances do not have bounded backward-arc number. Rather, the normal form
shows that some control on backward movement is essential: without such a
condition, an arbitrarily complicated acyclic \textsc{PAFP} instance can be
compressed into an ultra-thin BFS layering by storing its core arcs as backward
arcs.
 
\section{BFS Layers of the Input Graph}

\subsection{Input BFS-width 2 is hard}
\begin{corollary}[\textsf{NP}-completeness on BFS-width-\(2\) input DAGs]\label{cor:bfsw2-input-hard}
\textsc{PAFP} is \textsf{NP}-complete when restricted to instances
\(I=(G,s,t,\mathcal F)\) such that $G$ is a DAG and
\(
\mathrm{bfsw}(G,s)\le 2 \).
\end{corollary}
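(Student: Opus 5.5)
Membership in \textsf{NP} is immediate, so the work is \textsf{NP}-hardness, and the plan is to reuse the normal form of Theorem~\ref{thm:pl-to-bfsw2} unchanged. Given an arbitrary \textsc{PAFP} instance \(I=(G,s,t,\mathcal F)\) with \(G\) a DAG (hard by \cite{gabow76}), output \(I'=\operatorname{Normalize}(I)=(G',s',t,\mathcal F')\). Theorem~\ref{thm:pl-to-bfsw2} gives polynomial-time computability, acyclicity of \(G'\), and \(I\) is YES iff \(I'\) is YES. It therefore remains to show \(\operatorname{bfsw}(G',s')\le 2\), a statement about the \emph{input} graph rather than the union digraph.

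The key step is Claim~\ref{clm:normalize-dag-union}. It shows \(G'\cup\mathcal F'=(V',E')\), that is, the union digraph coincides with \(G'\) itself, because the reachable-core forbidden-pair arcs were already inserted into \(E_0\) and the booby-trap pairs coincide with attachment arcs. Hence \(\operatorname{dist}_{G'}(s',\cdot)=\operatorname{dist}_{G'\cup\mathcal F'}(s',\cdot)\) identically. Claim~\ref{clm:normalize-bfsw2-level} then yields \(\operatorname{bfsw}(G',s')=\operatorname{bfsw}(G'\cup\mathcal F',s')\le 2\).

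I do not expect a genuine obstacle; the only point needing care is confirming that this equality is literal. The BFS layers of the union digraph are computed in the very same arc set \(E'\), so nothing depends on the choice of \(\prec_{G'}\). By Remark~\ref{rem:order-choice} this holds for any fixed reachability-compatible rule. As an alternative to citing the claim, one can note that the level-function argument in Claim~\ref{clm:normalize-bfsw2-level} uses only arcs of \(E'\), so it applies verbatim to \(G'\).
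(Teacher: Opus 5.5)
Your proposal is correct and matches the paper's proof sketch: you reduce from \textsc{PAFP} on DAGs via $\operatorname{Normalize}$, then use Claim~\ref{clm:normalize-dag-union} ($G'\cup\mathcal F'=(V',E')=G'$) to transfer the bound $\operatorname{bfsw}(G'\cup\mathcal F',s')\le 2$ from Theorem~\ref{thm:pl-to-bfsw2} to the input graph $G'$ itself. Your extra remarks, that the level-function argument uses only arcs of $E'$ and that nothing depends on the choice of $\prec_{G'}$, are accurate and simply make explicit what the paper's sketch leaves implicit.
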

 \begin{proof}[Proof sketch]
 By Claim~\ref{clm:normalize-dag-union}, $\operatorname{Normalize}$ produces \textsc{PAFP} instances whose union digraph and input graph are equal. Thus Theorem~\ref{thm:pl-to-bfsw2} implies that \linebreak$\mathrm{bfsw}(G',s') \le 2$ for the instance $\operatorname{Normalize}(I) = (G',s',t,\mathcal F')$, and that hard instances can be normalized in polynomial-time to equivalent instances whose input graph is a DAG of BFS-width at most 2. \qedbox
 \end{proof}

\subsection{Input BFS-width 2 with few backward input arcs is FPT}
\label{subsec:input-bfsw2-few-backward-fpt}

We next contrast the union-digraph result with what can be proved from the
BFS layers of the input graph alone. For input BFS-width \(2\), a bounded number
of backward input arcs is enough to recover fixed-parameter tractability, even
though the forbidden-pair set remains unrestricted.

\begin{theorem}[Input BFS-width \(2\) with few backward input arcs]
\label{thm:input-bfsw2-few-backward-fpt}
Let
\(
I=(G=(V,E),s,t,\mathcal F)
\)
be a \textsc{PAFP} instance where \(G\) is a DAG, and assume that
\(
\operatorname{bfsw}(G,s)\le 2\).
Let
\(
k:=|B^-_G(s)|\)
be the number of \(s\)-backward input arcs in \(G\).
Then \(I\) can be decided in time
\(
2^k\cdot |I|^{O(1)}
\).
In particular, for every fixed \(k\), \textsc{PAFP} is polynomial-time decidable
on DAG instances of input BFS-width at most \(2\) with at most \(k\) backward
input arcs.
\end{theorem}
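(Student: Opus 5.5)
The plan is to guess which backward input arcs the solution path uses, and then show that what remains is a 2-SAT problem.

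\emph{Step 1: branch on backward arcs.} Let $\lambda:=\lambda_G$ be the BFS levels from $s$. We may first restrict to $R_G(s)$, as in the proof of Theorem~\ref{thm:bfs-backward-fpt}. Then branch over all subsets $S\subseteq B^-_G(s)$; there are at most $2^k$. In branch $S$ we look for a safe $s$--$t$ path that uses exactly the backward arcs in $S$, and delete the other backward arcs. Since $G$ is a DAG, fix a topological order $\prec$. A path uses its arcs in $\prec$-increasing order, so $S$ determines the order in which the backward arcs are traversed; no further guessing is needed. Write $S=\{(u_1,v_1),\ldots,(u_m,v_m)\}$ in this order, and set $v_0:=s$ and $u_{m+1}:=t$. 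The path then splits into segments $\sigma_j$ from $v_{j}$ to $u_{j+1}$, for $0\le j\le m$. Each segment uses only non-backward arcs, and every such arc goes from layer $d$ to layer $d$ or $d+1$. Hence $\sigma_j$ moves monotonically through the layers $\lambda(v_j),\ldots,\lambda(u_{j+1})$, visiting one or two vertices per layer. Two vertices in one layer occur exactly when the path uses a same-layer arc, and acyclicity gives such an arc only one possible direction. If some $\lambda(u_{j+1})<\lambda(v_j)$, the branch is rejected.

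\emph{Step 2: 2-SAT encoding of one branch.} Each layer $L_d=\{a_d,b_d\}$ has at most two vertices. For each segment $j$ and each layer $d$ in its range, introduce the variable $x^j_{a_d}$ ("$\sigma_j$ uses $a_d$") and likewise $x^j_{b_d}$. I will try to encode the following clauses.
\begin{itemize}
\item Occupancy: $x^j_{a_d}\lor x^j_{b_d}$, since a monotone segment must meet every layer in its range. Force the fixed endpoints $v_j$ and $u_{j+1}$ to true.
\item Adjacency: for $c\in L_d$ and $c'\in L_{d+1}$ with no arc $(c,c')$, add $\neg x^j_c\lor\neg x^j_{c'}$. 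Only one of the two vertices of $L_d$ can be the exit vertex toward $L_{d+1}$, so I expect this to work after a preprocessing step. That step fixes, per layer and per segment, whether a same-layer arc $a_d\to b_d$ is traversed. The resulting normalization is: if both vertices of $L_d$ are selected, then $a_d$ is the entry vertex, $b_d$ is the exit vertex, and the arc $(a_d,b_d)$ exists. Each of these conditions is a 2-clause.
\item Vertex-simplicity across segments: $\neg x^j_c\lor\neg x^{j'}_c$ for $j\ne j'$.
\item Safety: $\neg x^j_u\lor\neg x^{j'}_v$ for every $\{u,v\}\in\mathcal F$ and all $j,j'$.
\end{itemize}
All clauses have two literals, so each branch is decided in polynomial time, for a total of $2^k|I|^{O(1)}$.

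\emph{Step 3: correctness.} A safe path using exactly $S$ gives a satisfying assignment by reading off its vertices. Conversely, from a satisfying assignment I reconstruct each segment layer by layer. In each layer, take the selected vertex (or the selected ordered pair) and check that consecutive layers are linked by an arc; the occupancy and adjacency clauses guarantee this. Then concatenate the segments via the arcs of $S$.

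\emph{Main obstacle.} The hard step is the adjacency encoding in Step 2 when both vertices of a layer are selected. A naive implication ``$x_c\Rightarrow$ some out-neighbour of $c$ in $L_{d+1}$ is selected'' is a 3-clause in general. It also fails to exclude selections in which both vertices of a layer are chosen but are not joined by the arc $(a_d,b_d)$. I would first try to handle this by exploiting width $2$: the entry and exit roles inside a layer are forced by the unique possible direction of the same-layer arc. Failing that, I would replace the vertex variables by per-layer variables that encode the exit vertex of $\sigma_j$ in $L_d$. With those variables, every transition constraint between consecutive layers is a relation on two boolean variables and therefore 2-CNF expressible. The intra-layer pair would be recovered as the entry vertex (determined by the previous layer's exit) plus the exit vertex. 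Forbidden pairs and disjointness would then need to be re-expressed in terms of entry and exit variables, and I would check that they remain binary.
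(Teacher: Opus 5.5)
Your Step 1 matches the paper: branch over $S\subseteq B^-_G(s)$, order $S$ topologically, split the path into $E^+$-segments, and reject segments that would have to decrease in $\lambda$. The gap is in Step 2, which does not yet give a correct 2-CNF, and that encoding is the whole content of the theorem.

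The adjacency clause $\neg x^j_c\vee\neg x^j_{c'}$, for every non-arc $(c,c')\in L_d\times L_{d+1}$, is wrong as soon as a segment visits both vertices of a layer. Take arcs $s\to c$, $s\to e$, $c\to a$, $e\to b$, $a\to b$, $b\to t$. The layers are $\{s\},\{c,e\},\{a,b\},\{t\}$ and there are no backward arcs. Let $\mathcal F=\{\{s,e\}\}$. The unique safe path $s\to c\to a\to b\to t$ violates both $\neg x_c\vee\neg x_b$ and $\neg x_a\vee\neg x_t$. In fact the formula as written is unsatisfiable here: safety forces $x_e$ false, occupancy then forces $x_c$, hence $\neg x_b$, hence $x_a$, which contradicts $x_t$. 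So the completeness claim in Step 3 (``reading off its vertices'') fails.

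Two further problems:
\begin{itemize}
\item Whether a same-layer arc is traversed cannot be fixed by a preprocessing step. There are unboundedly many layers, so fixing it beforehand amounts to exponential branching; it must be a Boolean choice inside the formula.
\item Your fallback does not work as stated. The entry vertex of $L_{d+1}$ is not determined by the exit vertex of $L_d$ when that exit has arcs to both vertices of $L_{d+1}$. Whether the tail of the same-layer arc is visited is exactly what the forbidden-pair clauses need to see. So you need an independent entry bit in addition to the exit bit.
\end{itemize}

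Your first repair is the right one, but it has to change the clauses, not only their interpretation.
\begin{enumerate}
\item Label each two-vertex layer so that the same-layer arc, if one exists, is $(a_d,b_d)$. If there is none, add $\neg x^j_{a_d}\vee\neg x^j_{b_d}$.
\item Given occupancy, ``$\sigma_j$ enters $L_d$ at $a_d$'' is the literal $x^j_{a_d}$, and entering at $b_d$ is $\neg x^j_{a_d}$. Likewise ``$\sigma_j$ leaves $L_d$ from $b_d$'' is $x^j_{b_d}$, and leaving from $a_d$ is $\neg x^j_{b_d}$.
\item Impose transitions only between the exit of $L_d$ and the entry of $L_{d+1}$. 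For each non-arc $(c,c')\in L_d\times L_{d+1}$, add $\neg\mathrm{Exit}(c)\vee\neg\mathrm{Ent}(c')$ in these literals.
\item State the endpoint clauses as $\mathrm{Ent}(\alpha_j)$ and $\mathrm{Exit}(\omega_j)$.
\end{enumerate}

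With this, your two variables per occurrence are a renaming of the paper's entry/exit variables $e_{j,d},x_{j,d}$. The paper's layer-internal clauses $\neg\mathrm{Ent}(u)\vee\neg\mathrm{Exit}(v)$ for $(u,v)\notin E^+$ allow the same admissible (entry, exit) combinations as your orientation convention. Since $x^j_c$ still means ``$\sigma_j$ uses $c$'', your safety clauses $\neg x^j_u\vee\neg x^{j'}_v$ are unchanged and binary, and the correctness argument goes through as in the paper. Your cross-segment disjointness clauses are harmless but unnecessary: the concatenated walk lies in a DAG and is therefore automatically a path.
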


\begin{proof}
Let
\(
R:=\{v\in V:\operatorname{dist}_G(s,v)<\infty\}\) and
\(
G_R:=G[R]\).
If \(t\notin R\), then there is no directed \(s\)-\(t\) path in \(G\), and we
answer NO. Hence assume \(t\in R\). All layers and distances in this proof are
taken in \(G_R\). Write
\(
\lambda(v):=\operatorname{dist}_{G_R}(s,v)\) and \(
L_d:=L_{G_R}(s,d)\).
Let
\(
B^-:=B^-_G(s)
\)
and let
\(
E^+:=E(G_R)\setminus B^-\).
Since \(G_R\) is a subdigraph of \(G\), it is a DAG. For every input arc
\((u,v)\in E(G_R)\), shortest-path distance gives
\(
\lambda(v)\le \lambda(u)+1\).
Therefore, every arc of \(E^+\) either stays in one BFS layer or advances by
exactly one BFS layer.

The algorithm enumerates a subset
\(
S\subseteq B^-\).
This subset is intended to be exactly the set of backward input arcs used by
the desired safe path. Since \(G_R\) is acyclic, once \(S\) is fixed, the order
in which a directed path can use the arcs of \(S\) is forced by any topological
ordering of \(G_R\). Fix such a topological ordering, and list the arcs of \(S\)
in the order in which their tails appear:
\(
e_1=(a_1,b_1),\ldots,e_r=(a_r,b_r)\).
If \(S=\emptyset\), then \(r=0\).

A directed \(s\)-\(t\) path using exactly the backward arcs in \(S\) decomposes
into \(r+1\) possibly trivial segments using only arcs of \(E^+\):
\[
s \text{ to } a_1,\quad
b_1 \text{ to } a_2,\quad
\ldots,\quad
b_{r-1} \text{ to } a_r,\quad
b_r \text{ to } t,
\]
with the obvious interpretation that the single segment is from \(s\) to \(t\)
when \(r=0\). For uniform
notation, define segment endpoints
\(
(\alpha_0,\omega_0),\ldots,(\alpha_r,\omega_r)
\)
by
\[
(\alpha_0,\omega_0):=
\begin{cases}
(s,t), & r=0,\\
(s,a_1), & r\ge 1,
\end{cases}
\]
and, for \(1\le j\le r-1\),
\(
(\alpha_j,\omega_j):=(b_j,a_{j+1})\),
while, if \(r\ge 1\),
\(
(\alpha_r,\omega_r):=(b_r,t)\).
If for some segment \(j\) we have
\(
\lambda(\alpha_j)>\lambda(\omega_j)\),
then no path using only arcs of \(E^+\) can connect \(\alpha_j\) to
\(\omega_j\), so this guess \(S\) is rejected. Otherwise, we build a 2-CNF
formula \(\Phi_S\) whose satisfying assignments encode safe paths using
exactly the backward arcs in \(S\).

Fix a segment \(j\). For each layer
\(
d\in \{\lambda(\alpha_j),\lambda(\alpha_j)+1,\ldots,\lambda(\omega_j)\}\),
we create an occurrence \((j,d)\). This occurrence records how the \(j\)-th
nonbackward segment passes through layer \(L_d\). Since \(|L_d|\le 2\), the
segment can use either one vertex of \(L_d\), or two vertices connected by a
same-layer arc.

For an occurrence \((j,d)\), we use an entry choice and an exit choice. If
\(
L_d=\{u\}\),
then both entry and exit are forced to be \(u\). If
\(
L_d=\{u_d^0,u_d^1\}\),
introduce two Boolean variables \(e_{j,d}\) and \(x_{j,d}\). The variable
\(e_{j,d}\) chooses the entry vertex of the occurrence, and \(x_{j,d}\) chooses
the exit vertex:
\(
e_{j,d}=\textsc{false}\Longleftrightarrow \operatorname{ent}(j,d)=u_d^0\),
\(
e_{j,d}=\textsc{true}\Longleftrightarrow \operatorname{ent}(j,d)=u_d^1\),
and similarly
\(
x_{j,d}=\textsc{false}\Longleftrightarrow \operatorname{exit}(j,d)=u_d^0\),
\(
x_{j,d}=\textsc{true}\Longleftrightarrow \operatorname{exit}(j,d)=u_d^1\).
For a vertex \(u\in L_d\), write
\(
\operatorname{Ent}(j,d,u)
\)
for the literal, or constant, asserting that the entry of occurrence \((j,d)\)
is \(u\). Define
\(
\operatorname{Exit}(j,d,u)
\)
analogously for the exit choice.

We now add clauses.

\smallskip
\noindent
\emph{Layer-internal feasibility.}
For every occurrence \((j,d)\), and every ordered pair \(u,v\in L_d\) with
\(u\neq v\), if
\(
(u,v)\notin E^+\),
then add the clause
\(
\neg \operatorname{Ent}(j,d,u)\vee
\neg \operatorname{Exit}(j,d,v)\).
Thus, if the segment enters and exits a layer at different vertices, it must
use an available same-layer input arc.

\smallskip
\noindent
\emph{Segment endpoint constraints.}
For every segment \(j\), add the unit clause forcing the first occurrence to
enter at \(\alpha_j\):
\(
\operatorname{Ent}(j,\lambda(\alpha_j),\alpha_j)\)
and the unit clause forcing the last occurrence to exit at \(\omega_j\):
\(
\operatorname{Exit}(j,\lambda(\omega_j),\omega_j)\).

\smallskip
\noindent
\emph{Cross-layer adjacency constraints.}
For every segment \(j\), every
\[
d\in\{\lambda(\alpha_j),\ldots,\lambda(\omega_j)-1\},\]
and every pair \(u\in L_d\), \(v\in L_{d+1}\), if
\(
(u,v)\notin E^+\),
then add
\(
\neg \operatorname{Exit}(j,d,u)\vee
\neg \operatorname{Ent}(j,d+1,v)\).
This enforces that consecutive layer occurrences are connected by an input arc.

\smallskip
\noindent
\emph{Forbidden-pair constraints.}
For an occurrence \((j,d)\) and a vertex \(u\in L_d\), the occurrence uses \(u\)
precisely when \(u\) is its entry or its exit. Thus the event
``occurrence \((j,d)\) uses \(u\)'' is represented by the disjunction
\(
\operatorname{Ent}(j,d,u)\vee \operatorname{Exit}(j,d,u)\).
For every forbidden pair \(\{a,b\}\in\mathcal F\), for every occurrence
\((j,d)\) with \(a\in L_d\), and for every occurrence \((j',d')\) with
\(b\in L_{d'}\), we forbid using both endpoints by adding all clauses
\(
\neg P\vee \neg Q\),
where
\(
P\in
\{\operatorname{Ent}(j,d,a),\operatorname{Exit}(j,d,a)\}
\)
and
\(
Q\in
\{\operatorname{Ent}(j',d',b),\operatorname{Exit}(j',d',b)\}\).
Clauses containing Boolean constants are simplified in the usual way; if an
empty clause is produced, the formula is declared unsatisfiable.

All clauses are 2-CNF. We solve \(\Phi_S\) by \textsc{2-sat}. We accept if
\(\Phi_S\) is satisfiable for at least one subset \(S\subseteq B^-\).

We prove correctness. First suppose that \(\Phi_S\) is satisfiable. For each
segment \(j\), the assignment determines an entry and an exit vertex in every
layer from \(\lambda(\alpha_j)\) to \(\lambda(\omega_j)\). The layer-internal
clauses ensure that, inside each layer, the segment either stays at one vertex
or traverses a valid same-layer arc. The cross-layer clauses ensure that the
exit vertex in layer \(d\) has an input arc to the entry vertex in layer \(d+1\).
The endpoint clauses ensure that the segment starts at \(\alpha_j\) and ends at
\(\omega_j\). Hence the assignment yields a directed walk from \(\alpha_j\) to
\(\omega_j\) using only arcs of \(E^+\).

Concatenating these \(r+1\) nonbackward segment walks with the selected
backward arcs
\(
(a_1,b_1),\ldots,(a_r,b_r)\)
gives a directed \(s\)-\(t\) walk in \(G_R\), hence in \(G\). Since \(G\) is a
DAG, every directed walk is vertex-simple, so this walk is a directed path. The
forbidden-pair clauses guarantee that no forbidden pair has both endpoints on
the path. Thus \(I\) is a YES-instance.

Conversely, suppose \(I\) has a safe directed \(s\)-\(t\) path \(P\). Let
\(S\subseteq B^-\) be the set of backward input arcs used by \(P\). When the
arcs of \(S\) are listed in topological order of their tails, the path \(P\)
decomposes into the corresponding nonbackward segments. Each such segment is
layer-nondecreasing, because all of its arcs lie in \(E^+\). For each occurrence
\((j,d)\), set the entry vertex to be the first vertex of the \(j\)-th segment
appearing in layer \(L_d\), and set the exit vertex to be the last vertex of that
segment appearing in \(L_d\). Since \(|L_d|\le 2\) and the segment is a directed
path, this assignment satisfies all layer-internal and cross-layer adjacency
clauses. It also satisfies the endpoint clauses. Finally, because \(P\) is safe,
no forbidden-pair clause is violated. Hence \(\Phi_S\) is satisfiable, and the
algorithm accepts.

It remains only to bound the running time. There are \(2^k\) choices for
\(S\). For each choice, there are at most \(k+1\) segments and at most
\(|V|\) layers. Since every layer has size at most \(2\), the number of variables
and adjacency clauses is polynomial in \(|I|\) for each fixed choice of \(S\).
The forbidden-pair clauses are also polynomially many, at worst
\(O(|\mathcal F|(k+1)^2|V|^2)\). Thus each formula is built and solved in
\(|I|^{O(1)}\) time, and the total running time is
\(
2^k\cdot |I|^{O(1)}\).
\end{proof}
\subsection{Input BFS-width 3 is hard even with no backward input arcs}
\begin{lemma}[Layered structure of the Gabow--Maheshwari--Osterweil instances]
\label{lem:gmo-layered-structure}
The Gabow--Maheshwari--Osterweil reduction from \textsc{3-sat} to
\textsc{PAFP} produces DAG instances \(I=(G,s,t,\mathcal F)\) with a partition
\(
V(G)=L_0\cup L_1\cup\cdots\cup L_{m+1}
\)
such that
\(
L_0=\{s\}\), \(L_{m+1}=\{t\}\)
and, for every \(i\in\{1,\ldots,m\}\),
\(
|L_i|=3\).
Moreover, every input arc of \(G\) goes from \(L_i\) to \(L_{i+1}\) for some
\(i\in\{0,\ldots,m\}\), and the resulting \textsc{PAFP} instance is a
YES-instance if and only if the original \textsc{3-sat} instance is satisfiable.
\end{lemma}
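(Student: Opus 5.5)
We would prove the lemma by recalling the Gabow--Maheshwari--Osterweil construction explicitly and reading off the layered structure. Given a \textsc{3-sat} formula \(\varphi=C_1\wedge\cdots\wedge C_m\) in which every clause has exactly three literal occurrences (padding shorter clauses by repeating a literal is harmless), we build \(G\) as follows. Introduce \(s\), \(t\), and for each clause \(C_i\) three vertices \(\ell_{i,1},\ell_{i,2},\ell_{i,3}\), one per literal occurrence, and set \(L_0=\{s\}\), \(L_i=\{\ell_{i,1},\ell_{i,2},\ell_{i,3}\}\), \(L_{m+1}=\{t\}\). The arcs are all \((s,\ell_{1,c})\), all \((\ell_{i,c},\ell_{i+1,c'})\) for \(1\le i<m\), and all \((\ell_{m,c},t)\), i.e.\ complete bipartite connections between consecutive layers. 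The forbidden pairs are \(\{\ell_{i,c},\ell_{i',c'}\}\) whenever the two occurrences are complementary literals \(x\) and \(\neg x\). Acyclicity, the partition, \(|L_i|=3\), and the fact that every arc goes from \(L_i\) to \(L_{i+1}\) all hold by construction, and the construction is clearly polynomial.

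The key structural step is to show that every directed \(s\)--\(t\) path in \(G\) has the form \(s,\ell_{1,c_1},\ldots,\ell_{m,c_m},t\), choosing exactly one vertex per layer. This holds because arcs only advance one layer, so a path from \(L_0\) to \(L_{m+1}\) visits each intermediate layer exactly once. Conversely, every such choice sequence is a path, since consecutive layers are completely joined.

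For the equivalence, suppose \(\varphi\) has a satisfying assignment \(\sigma\). In each clause, pick an occurrence made true by \(\sigma\). The resulting path cannot contain both \(x\) and \(\neg x\), because both would then be true, so the path is safe. Conversely, suppose we are given a safe path. The chosen literal occurrences contain no complementary pair, so we can set each chosen literal true and the remaining variables arbitrarily. This assignment is consistent and satisfies every clause.

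The main obstacle is fidelity to the cited reduction rather than the mathematics. The original paper of Gabow et al.\ may phrase the construction with extra connector vertices or with different padding. If so, we would either observe that intermediate connector layers have size at most 3 and renumber (the statement only needs sizes exactly 3, so connectors would need to be absorbed), or simply present the construction above as the standard form of the reduction and verify it self-containedly as just outlined. The other point to check is handling clauses with fewer than three distinct literals. Duplicating a literal occurrence keeps \(|L_i|=3\) and, since duplicates are never complementary to each other, it does not affect the equivalence.
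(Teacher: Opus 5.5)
Your proposal is correct and matches the paper's approach. The paper's proof also identifies the clause-by-clause layered construction of Gabow et al.: three literal vertices per clause layer, arcs only between consecutive layers, and forbidden pairs between complementary literals. It then reads off the partition and the equivalence; you additionally spell out the one-vertex-per-layer path structure, both directions of the equivalence, and the padding of short clauses needed to get exactly \(|L_i|=3\).
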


\begin{proof}
This is precisely the layered construction in the proof of Lemma~2 of
Gabow, Maheshwari, and Osterweil~\cite[pp.~229--230]{gabow76}. The layer
\(L_i\), for \(1\le i\le m\), consists of the three vertices corresponding to
the three literals of the \(i\)-th clause. The arcs go only from \(s\) to the
first clause layer, between consecutive clause layers, and from the last clause
layer to \(t\). The forbidden pairs encode literal incompatibilities, so a safe
\(s\)-\(t\) path exists exactly when the formula is satisfiable.\qedbox
\end{proof}
\begin{proposition}[Input BFS-width \(3\) with no backward arcs is hard]
\label{prop:input-width3-no-backward-hard}
\textsc{PAFP} is \(\mathsf{NP}\)-complete on DAG instances
\(I=(G,s,t,\mathcal F)\) satisfying
\(
\operatorname{bfsw}(G,s)\le 3\) and \(B^-_G(s)=\emptyset\).
\end{proposition}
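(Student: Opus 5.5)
Membership in \(\mathsf{NP}\) is immediate: a candidate path is a polynomial-size certificate whose directedness, endpoints and safety can be checked in polynomial time. For hardness we reduce from \textsc{3-sat}, using exactly the Gabow--Maheshwari--Osterweil instances of Lemma~\ref{lem:gmo-layered-structure}. Given a 3-CNF formula with \(m\) clauses (we may assume each clause has exactly three literal occurrences, padding by repetition as in the original construction), output \(I=(G,s,t,\mathcal F)\). By the lemma, \(I\) is a DAG, it is computed in polynomial time, and it is a YES-instance iff the formula is satisfiable. What remains is to check the two structural conditions \(\operatorname{bfsw}(G,s)\le 3\) and \(B^-_G(s)=\emptyset\).

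The key step is to show that the given partition \(L_0,\ldots,L_{m+1}\) coincides with the directed BFS layering from \(s\), i.e.\ \(\operatorname{dist}_G(s,v)=i\) for every \(v\in L_i\). First, every arc goes from some \(L_i\) to \(L_{i+1}\), so by induction along any directed \(s\)-\(v\) path every vertex of \(L_i\) has \(\operatorname{dist}_G(s,v)\ge i\); more precisely, every directed path from \(s\) has its \(j\)-th vertex in \(L_j\), so any \(s\)-\(v\) path with \(v\in L_i\) has length exactly \(i\). Second, we need reachability: every vertex of \(L_i\) is reachable from \(s\). From the construction, \(s\) is joined to all three vertices of \(L_1\), consecutive clause layers are completely joined (every literal vertex of clause \(i\) to every literal vertex of clause \(i+1\)), and all of \(L_m\) is joined to \(t\); hence every vertex of \(L_i\) is reached by a path of length \(i\). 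I expect this reachability check to be the only real obstacle, since Lemma~\ref{lem:gmo-layered-structure} as stated records only that arcs go between consecutive layers, not that all vertices are reached; I would verify it directly from the description in \cite[pp.~229--230]{gabow76}. If some vertex were not reached, it could simply be deleted together with its forbidden pairs without changing the answer, and the layering argument would go through on the reachable part.

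With \(\lambda(v)=i\) for \(v\in L_i\), we get \(L_G(s,i)=L_i\), so \(\operatorname{bfsw}(G,s)=\max(1,3,1)\le 3\). Every arc \((u,v)\) has \(\lambda(v)=\lambda(u)+1>\lambda(u)\), so no arc is \(s\)-backward and \(B^-_G(s)=\emptyset\). Together with the equivalence from Lemma~\ref{lem:gmo-layered-structure}, this gives a polynomial-time many-one reduction from \textsc{3-sat} into the stated restricted class, proving \(\mathsf{NP}\)-completeness.
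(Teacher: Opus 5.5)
Your proposal is correct and follows the same route as the paper: use the Gabow--Maheshwari--Osterweil layered instances from Lemma~\ref{lem:gmo-layered-structure}, observe that the directed BFS layering from \(s\) agrees with the given partition, and read off \(\operatorname{bfsw}(G,s)\le 3\) and \(B^-_G(s)=\emptyset\). The reachability check you flag as the main obstacle is not actually needed: you already observe that every \(s\)-\(v\) path with \(v\in L_i\) has length exactly \(i\), which gives \(L_G(s,i)\subseteq L_i\) and hence width at most \(3\); and \(B^-_G(s)\) only involves arcs between reachable vertices, each of which advances exactly one layer.
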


\begin{proof}
Membership in \(\mathsf{NP}\) is immediate. For hardness, use the instances
from Lemma~\ref{lem:gmo-layered-structure}. Since every arc goes from
\(L_i\) to \(L_{i+1}\), the directed BFS layers from \(s\) are exactly
\(
L_0,L_1,\ldots,L_{m+1}\).
Thus every BFS layer has size at most \(3\), so
\(
\operatorname{bfsw}(G,s)\le 3\).
Moreover, every input arc advances one BFS layer, so no input arc is
\(s\)-backward:
\(
B^-_G(s)=\emptyset\).
The same instances are \(\mathsf{NP}\)-hard by
Lemma~\ref{lem:gmo-layered-structure}.\qedbox
\end{proof}

\section{Exact-Length Layers of the Input Graph}
\subsection{Exact-length width 2 is polynomial-time decidable}

\begin{theorem}[\textsc{PAFP} on DAGs of exact-length width at most $2$]\label{thm:exactwidth2}
Let $I=(G=(V,E),s,t,\mathcal F)$ be a \textsc{PAFP} instance where $G$ is a DAG.
Assume $\mathrm{elw}(G,s)\le 2$. Then $I$ can be decided in deterministic polynomial time.
\end{theorem}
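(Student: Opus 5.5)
The plan is to guess the length \(\ell\) of the desired path and then encode fixed-length paths by a \textsc{2-sat} formula, one Boolean variable per position. Since \(G\) is a DAG, every directed \(s\)-\(t\) path has length at most \(|V|-1\). So it suffices, for each \(\ell\in\{0,\ldots,|V|-1\}\) with \(t\in E_G(s,\ell)\), to decide whether there is a safe \(s\)-\(t\) path of length exactly \(\ell\), and to accept if any such \(\ell\) succeeds. There are polynomially many values of \(\ell\).

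Fix \(\ell\). A directed \(s\)-\(t\) path \(v_0=s,v_1,\ldots,v_\ell=t\) has \(v_d\in E_G(s,d)\) for every \(d\), because its prefix is an \(s\)-\(v_d\) path of length \(d\). The hypothesis \(\mathrm{elw}(G,s)\le 2\) gives \(|E_G(s,d)|\le 2\). For each position \(d\) I introduce one Boolean variable \(y_d\) selecting which of the at most two candidates of \(E_G(s,d)\) sits at position \(d\); if the layer is a singleton, the choice is a constant. Write \(\mathrm{At}(d,u)\) for the literal or constant asserting that \(v_d=u\). I would then add the following 2-clauses.
\begin{itemize}
\item Endpoint units: \(\mathrm{At}(0,s)\) and \(\mathrm{At}(\ell,t)\).
\item Adjacency: for each \(d<\ell\), each \(u\in E_G(s,d)\) and each \(v\in E_G(s,d+1)\) with \((u,v)\notin E\), the clause \(\neg\mathrm{At}(d,u)\vee\neg\mathrm{At}(d+1,v)\).
\item Forbidden pairs: for each \(\{a,b\}\in\mathcal F\) and all positions \(d,d'\) with \(a\in E_G(s,d)\) and \(b\in E_G(s,d')\), the clause \(\neg\mathrm{At}(d,a)\vee\neg\mathrm{At}(d',b)\).
\end{itemize}
All clauses have at most two literals, so satisfiability is decided in polynomial time. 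Computing the layers \(E_G(s,d)\) is a simple dynamic program, \(E_G(s,d+1)=N^+(E_G(s,d))\), which runs in polynomial time.

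For correctness:
\begin{itemize}
\item \emph{Satisfying assignment to path.} A satisfying assignment yields a sequence \(v_0,\ldots,v_\ell\) with \(v_0=s\), \(v_\ell=t\), and each consecutive pair an arc of \(G\). This is a directed walk, and since \(G\) is a DAG it is vertex-simple. The forbidden-pair clauses ensure that no pair has both endpoints among the chosen vertices, so the path is safe.
\item \emph{Path to satisfying assignment.} A safe path of length \(\ell\) gives an assignment by setting \(v_d\) to be its \(d\)-th vertex. The endpoint and adjacency clauses hold. Each vertex appears at only one position, so a violated forbidden-pair clause would put both endpoints of a pair in \(\mathcal F\) on the path, contradicting safety.
\end{itemize}

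The main obstacle is conceptual rather than technical: the forbidden-pair clauses must be 2-clauses even though a vertex may be a candidate at many positions. This works because each position carries a single binary choice, so "\(a\) occupies position \(d\)" is a single literal. That is exactly where exact-length width \(2\) is needed, in contrast to the BFS-layer encoding of Theorem~\ref{thm:input-bfsw2-few-backward-fpt}, which needed entry and exit variables and backward-arc guessing. I would also double-check two degenerate points: the case where some \(E_G(s,d)\) is empty for \(d\le\ell\), which cannot happen if \(t\in E_G(s,\ell)\), and the handling of constants in clauses.
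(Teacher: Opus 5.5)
Your proposal is correct and follows essentially the same route as the paper: enumerate the length $\ell$ with $t\in E_G(s,\ell)$, use one Boolean variable per exact-length layer of size $2$, encode the endpoint, adjacency and forbidden-pair constraints as a 2-CNF, and solve it with \textsc{2-sat}. The only cosmetic difference is that you also allow $d=d'$ in the forbidden-pair clauses, which the paper excludes. These extra clauses change nothing: since $a\neq b$, a clause with $d=d'$ arises only when $a,b$ are the two vertices of the same layer, and then it is the tautology $x_d\vee\neg x_d$.
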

\begin{remark}[the promise still allows many paths]\label{rem:elw-exp}
The condition $\mathrm{elw}(G,s)\le 2$ does \emph{not} imply that $G$ has few $s$--$t$ paths.
For example, let $G$ be a properly layered DAG with layers
$A_0=\{s\}$, $A_d=\{u_d,v_d\}$ for $d=1,\ldots,\ell-1$, and $A_\ell=\{t\}$, and include all arcs
from each layer to the next (i.e., every vertex in $A_d$ has arcs to all vertices in $A_{d+1}$).
Then $E_G(s,d)=A_d$ for all $d$, so $\mathrm{elw}(G,s)=2$, but $G$ has $2^{\ell-1}=2^{\Theta(|V|)}$
distinct directed $s$--$t$ paths.
Thus Theorem~\ref{thm:exactwidth2} gives polynomial-time solvability even in instances with exponentially many
candidate paths and completely arbitrary forbidden pairs.
\end{remark}
\begin{remark}\label{rem:width-contrast}
For every DAG \(G\), every \(d\ge 0\), and every root \(s\), one has
\(L_G(s,d)\subseteq E_G(s,d)\), and hence
\(\mathrm{bfsw}(G,s)\le \mathrm{elw}(G,s)\).
Thus the promise \(\mathrm{elw}(G,s)\le 2\) is stricter than
\(\mathrm{bfsw}(G,s)\le 2\). Our results show that this stronger local promise yields a polynomial-time
island, tight already at width \(3\), whereas BFS-width \(2\) does not suffice
for tractability.
\end{remark}
\begin{proof}[Proof of Theorem~\ref{thm:exactwidth2}]
Fix an instance $I=(G=(V,E),s,t,\mathcal F)$ where $G$ is a DAG and $\mathrm{elw}(G,s)\le 2$.
Let $n:=|V|$. Since $G$ is acyclic, every directed path is vertex-simple and hence has length at most $n-1$. 
Therefore it suffices to consider lengths $\ell\in\{0,1,\ldots,n-1\}$.

\noindent\textbf{Exact-length layers and their computation.}
For each $d \in \{0,\ldots,n-1\}$ define $D_d \coloneq E_G(s,d)$.
We compute $D_0,\ldots,D_{n-1}$ by the DP
\begin{equation}
D_0 := \{s\},\qquad
D_d := \{\,v\in V \mid \exists (u,v)\in E \text{ with } u\in D_{d-1}\,\}\quad (d\ge 1).
\end{equation}
Correctness is immediate by induction on $d$ (extend a length-$(d\!-\!1)$ path by one arc, and conversely take the last arc
of a length-$d$ path). Evaluating each $D_d$ by scanning all arcs once takes $O(|E|)$ time per $d$, hence $O(n|E|)$ total time. By assumption $\mathrm{elw}(G,s)\le 2$, we have $|D_d|\le 2$ for all $d\ge 0$.

\noindent\emph{(Walks vs.\ paths in a DAG.)}
We will use the following standard fact: in a DAG, every directed walk is vertex-simple, hence a directed path.
Indeed, if a directed walk repeats a vertex $v$, then the subwalk between two occurrences of $v$ forms a directed cycle,
contradicting acyclicity. In particular, later any length-$\ell$ directed walk whose consecutive pairs are arcs is
automatically a directed path.

\noindent\textbf{Encoding length-$\ell$ safe paths as \textsc{2-sat}.}
Fix $\ell\in\{0,\ldots,n-1\}$ with $t\in D_\ell$.
\emph{Idea.}
Any directed $s$--$t$ path of length exactly $\ell$ can be written as a sequence
$s=X_0 \to X_1 \to \cdots \to X_\ell=t$,
where $X_d$ is the vertex used at \emph{position} $d$ along the path.
Necessarily $X_d \in D_d$ for each $d$, and consecutive choices must satisfy $(X_d,X_{d+1})\in E$.
The forbidden pairs impose additional constraints of the form “it is not allowed that
$X_i=a$ and $X_j=b$ simultaneously.”
Under the promise $|E_G(s,d)|\le 2$, each position $d$ has at most two possible vertices, i.e.\ a Boolean choice.
Moreover, each constraint forbids \emph{one} joint assignment of \emph{two} positions, so it is naturally expressible
as a 2-CNF clause. Thus the fixed-length subproblem reduces to \textsc{2-sat}. 

\noindent
\textbf{Variables (one Boolean choice per layer of size 2).}
If $|D_d|=2$, fix a deterministic ordering $D_d=\{v_d^0,v_d^1\}$ and introduce a Boolean variable $x_d$ with the intended
meaning: $x_d=\textsc{false}$ chooses $v_d^0$ as the $d$th vertex on a length $\ell$ path, and $x_d=\textsc{true}$ chooses $v_d^1$.
If $|D_d|=1$, write $D_d=\{v_d\}$; then position $d$ is forced and we introduce no variable.

\noindent
\textbf{A literal for “position $d$ selects vertex $u$.”}
Define a literal (or constant) $\mathrm{Sel}(d,u)$ for each $u\in D_d$ by
\begin{equation}
\mathrm{Sel}(d,u)\;=\;
\begin{cases}
\neg x_d & \text{if }|D_d|=2\text{ and }u=v_d^0,\\
x_d      & \text{if }|D_d|=2\text{ and }u=v_d^1,\\
\top     & \text{if }|D_d|=1\text{ and }D_d=\{u\}.
\end{cases}
\end{equation}
Thus $\mathrm{Sel}(d,u)$ evaluates to \textsc{true} exactly when the induced choice at position $d$ is $u$
(and it is identically true if the choice is forced).

\noindent
\textbf{Clauses.}
We add clauses of a single uniform form: to forbid the simultaneous event
“$X_i=a$ and $X_j=b$,” we add the clause $\neg \mathrm{Sel}(i,a)\ \lor\ \neg \mathrm{Sel}(j,b)$. We now list the constraints we need.

\noindent
\underline{(i) Endpoint constraint at $t$.}
We enforce that position $\ell$ is $t$ by adding the (unit) clause $\mathrm{Sel}(\ell,t)$.
(If $|D_\ell|=1$ this is $\top$ and does nothing; otherwise it fixes $x_\ell$ appropriately.)

\noindent
\underline{(ii) Adjacency constraints.}
For each $d\in\{0,\ldots,\ell-1\}$ and each $u\in D_d$, $v\in D_{d+1}$ with $(u,v)\notin E$,
we forbid choosing $u$ at position $d$ and $v$ at position $d+1$ by adding $\neg \mathrm{Sel}(d,u)\ \lor\ \neg \mathrm{Sel}(d+1,v)$.
Since $|D_d|,|D_{d+1}|\le 2$, there are at most $4$ such checks per $d$.

\noindent
\underline{(iii) Forbidden-pair constraints.}
For each forbidden pair $\{a,b\}\in\mathcal F$ and each pair of \emph{distinct} positions $i\neq j$
with $a\in D_i$ and $b\in D_j$, we forbid simultaneously selecting $a$ and $b$ by adding
$\neg \mathrm{Sel}(i,a)\ \lor\ \neg \mathrm{Sel}(j,b)$.
(If a vertex is forced at some position, $\mathrm{Sel}$ becomes $\top$, and this correctly collapses
to a unit clause forbidding the other endpoint at the other position.)

Let $\Phi_\ell$ be the conjunction of all clauses above. The only Boolean variables occurring in $\Phi_\ell$ are the variables $x_d$ introduced for indices $d\in\{0,\ldots,\ell\}$ with $|D_d|=2$.
We substitute $\top$ and $\bot$ as Boolean constants and simplify: delete any clause containing $\top$,
remove any occurrence of $\bot$ from a clause, and reject if some clause becomes empty.
The resulting equivalent formula contains no constants and is a standard \textsc{2-sat} instance. For clarity, we work with the original, pre-simplified formula $\Phi_\ell$  in the algorithm analysis.

\emph{Correctness for fixed $\ell$.}
We show that $\Phi_\ell$ is satisfiable if and only if there exists a safe directed $s$--$t$ path of length exactly $\ell$.

\noindent
\emph{($\Rightarrow$) Satisfying assignment $\Rightarrow$ safe path.}
Given a satisfying assignment to the variables of $\Phi_\ell$, define $X_d$ as follows:
if $D_d=\{v_d\}$ then $X_d=v_d$, and if $D_d=\{v_d^0,v_d^1\}$ then $X_d=v_d^0$ when $x_d=\textsc{false}$
and $X_d=v_d^1$ when $x_d=\textsc{true}$.
The endpoint clause enforces $X_\ell=t$, and since $D_0=\{s\}$ we have $X_0=s$. 

By the adjacency clauses, for every $d\in\{0,\ldots,\ell-1\}$ we must have $(X_d,X_{d+1})$ in $E$; otherwise the unique
clause forbidding that pair would be violated. Hence
$s=X_0 \to X_1 \to \cdots \to X_\ell=t$
is a directed walk of length $\ell$ in $G$, and therefore (since $G$ is a DAG) it is a directed path. 

Finally, consider any forbidden pair $\{a,b\}\in\mathcal F$.
If the path contained both $a$ and $b$, then there would exist distinct positions $i\neq j$ such that $X_i=a$ and $X_j=b$,
and the corresponding forbidden-pair clause would be violated. Thus the path is safe.

\noindent
\emph{($\Leftarrow$) Safe path $\Rightarrow$ satisfying assignment.}
Let $P \coloneq s=u_0 \to u_1 \to \cdots \to u_\ell=t$
be a safe directed $s$--$t$ path of length $\ell$.
Then $u_d\in D_d$ for each $d$.
For each $d$ with $|D_d|=2$ and $D_d=\{v_d^0,v_d^1\}$, assign $x_d$ so that the induced choice $X_d$ equals $u_d$.
This assignment satisfies the endpoint clause (since $u_\ell=t$), satisfies every adjacency clause (since $(u_d,u_{d+1})\in E$),
and satisfies every forbidden-pair clause because $P$ is safe. Hence $\Phi_\ell$ is satisfiable.

\emph{Running time.}
Let $m:=|E|$ and $f:=|\mathcal F|$. Computing $D_0,\ldots,D_{n-1}$ takes $O(nm)$ time. Now fix $\ell$. Since $|D_d|\le 2$, constraint \textbf{(ii)} inspects at most $|D_d|\cdot|D_{d+1}|\le 4$ pairs per $d<\ell$,
hence contributes $O(\ell)$ clauses. For \textbf{(iii)}, for each vertex $v$ let
$\mathrm{occ}_\ell(v):=\{d\in\{0,\ldots,\ell\}\mid v\in D_d\}$; the sets $\mathrm{occ}_\ell(v)$ can be built in $O(\ell)$ time by scanning $D_0,\ldots,D_\ell$, since
$\sum_{d=0}^\ell |D_d|\le 2(\ell+1)$. Each $\{a,b\}\in\mathcal F$ contributes at most
$|\mathrm{occ}_\ell(a)|\,|\mathrm{occ}_\ell(b)|\le(\ell+1)^2$ clauses, so $|\Phi_\ell|=O(\ell+f\ell^2)$.
Building $\Phi_\ell$ and solving it via \textsc{2-sat} both take $O(|\Phi_\ell|)$ time, hence $O(\ell+f\ell^2)$. Summing over $\ell\in\{0,\ldots,n-1\}$ yields
$O(nm)+\sum_{\ell=0}^{n-1}O(\ell+f\ell^2)=O(nm+n^2+fn^3)$, which is polynomial. \qedbox
\end{proof}

\subsection{Exact-length width 3 is \textsf{NP}-complete}
\begin{proposition}[Exact-length width $3$ is already hard]
\label{prop:elw3-hard}
\textsc{PAFP} is \textsf{NP}-complete on DAGs $G$ with
\(\mathrm{elw}(G,s)\le 3\). \end{proposition}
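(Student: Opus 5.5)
Membership in \textsf{NP} is immediate: a safe \(s\)--\(t\) path is a polynomial-size certificate, and checking it is easy. For hardness I would reuse the Gabow--Maheshwari--Osterweil instances exactly as packaged in Lemma~\ref{lem:gmo-layered-structure}. No new reduction is needed; I only have to compute the exact-length layers of those instances.

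The key step is to show that in such an instance \(E_G(s,d)\subseteq L_d\) for every \(d\), where \(L_0,\ldots,L_{m+1}\) is the partition from the lemma. I would first argue by induction along any directed path from \(s\): every input arc goes from some \(L_i\) to \(L_{i+1}\), so a directed path starting at \(s\in L_0\) whose \(d\)-th vertex lies in \(L_d\) must have its \((d+1)\)-th vertex in \(L_{d+1}\). Consequently every vertex reachable from \(s\) by a path of length exactly \(d\) lies in \(L_d\). This gives \(|E_G(s,d)|\le |L_d|\le 3\) for \(0\le d\le m+1\). For \(d>m+1\) the layer is empty, because no arcs leave \(L_{m+1}=\{t\}\), so there are no longer paths. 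Hence \(\operatorname{elw}(G,s)\le 3\). The graph is a DAG by the lemma, and equivalence with \textsc{3-sat} satisfiability is also given there, so the reduction is polynomial-time and correct.

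The only potential obstacle is the formal detail that all arcs are strictly layer-to-next-layer. This rules out same-layer arcs and skip arcs, either of which could let one vertex occur at several exact lengths. That property is asserted in Lemma~\ref{lem:gmo-layered-structure}, so the induction goes through directly. (Equivalently, one could note that in this strictly layered graph the exact-length layers coincide with the BFS layers from Proposition~\ref{prop:input-width3-no-backward-hard}.)
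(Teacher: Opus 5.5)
Your proposal is correct and follows essentially the same route as the paper: both reuse the Gabow--Maheshwari--Osterweil instances from Lemma~\ref{lem:gmo-layered-structure} and use the fact that every arc goes from $L_i$ to $L_{i+1}$, so a vertex reached from $s$ after exactly $d$ steps lies in $L_d$. The paper states the equality $E_G(s,i)=L_i$, whereas you prove only the inclusion $E_G(s,d)\subseteq L_d$; that inclusion is all the bound $\mathrm{elw}(G,s)\le 3$ needs.
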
 
\begin{proof} Membership in \(\mathsf{NP}\) is immediate. For hardness, again use the
instances from Lemma~\ref{lem:gmo-layered-structure}. Since every arc goes
from \(L_i\) to \(L_{i+1}\), every directed path from \(s\) reaches layer
\(L_i\) after exactly \(i\) steps. Hence
\(
E_G(s,i)=L_i
\)
for \(i=0,\ldots,m+1\), and \(E_G(s,i)=\emptyset\) for all other \(i\).
Therefore
\(
\operatorname{elw}(G,s)\le 3\).
The hardness follows from Lemma~\ref{lem:gmo-layered-structure}.\qedbox
\end{proof}

\begin{credits}
\subsubsection{\discintname}
The author has no competing interests to declare.
\end{credits}
\bibliographystyle{splncs04}
\bibliography{BFSnormalForm}

\appendix
\section{Well-definedness of the output of \(\operatorname{Normalize}\)}\label{app:norm-valid}
\begin{claim}[\(\operatorname{Normalize}\) outputs valid \textsc{PAFP} instances]
\label{lem:normalize-valid-output}
Let
\(
I=(G=(V,E),s,t,\mathcal F)
\)
be a \textsc{PAFP} instance where \(G\) is a DAG, and let
\(
I'=\operatorname{Normalize}(I)=(G'=(V',E'),s',t,\mathcal F')
\)
be the output of Definition~\ref{def:normalize}. Then \(I'\) is a valid
\textsc{PAFP} instance. In particular, \(G'\) is a finite loopless directed
graph, \(s',t\in V'\) with \(s'\neq t\), and
\(
\mathcal F'\subseteq \binom{V'}{2}\).
\end{claim}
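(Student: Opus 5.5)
The plan is a direct containment check against Definition~\ref{def:normalize}, taking the four required properties one at a time.

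\emph{Finiteness.} Since \(V\) is finite, so is \(V_{\mathrm{rch}}\subseteq V\), and \(q=|V_{\mathrm{rch}}|\le |V|\). The gadget contributes exactly \(1+(2q-1)+q\) fresh vertices, so \(V'\) is finite. Also \(E'\subseteq V'\times V'\) is finite once I check the endpoint containments below.

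\emph{Arcs lie in \(V'\times V'\).} Every arc in \(E_0\) has both endpoints in \(V_{\mathrm{rch}}\). For \(E_{\mathrm{rch}}\) this holds by definition. For \(A_\triangleleft(\mathcal F_{\mathrm{rch}})\) it holds because pairs of \(\mathcal F_{\mathrm{rch}}\) have both endpoints in \(V_{\mathrm{rch}}\). Spine arcs join vertices of \(\{s'\}\cup P\). Attachment arcs join \(P\) to \(W\), or \(W\) to \(V_{\mathrm{rch}}\) via \(r_i\in V_{\mathrm{rch}}\).

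\emph{Looplessness.} I check each arc type separately:
\begin{itemize}
\item \(E_{\mathrm{rch}}\subseteq E\) is loopless because \(G\) is.
\item Arcs of \(A_\triangleleft(\mathcal F_{\mathrm{rch}})\) have distinct endpoints, since each pair in \(\binom{V}{2}\) consists of two distinct vertices and \(\triangleleft\) is strict.
\item Spine arcs \((p_j,p_{j+1})\) and \((s',p_1)\) join distinct fresh vertices.
\item Each attachment arc joins vertices from disjoint sets, namely \(P\) and \(W\), or \(W\) and \(V\).
\end{itemize}

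\emph{Terminals.} By construction \(s'\in V'\) and \(t\in V'\). Moreover \(s'\neq t\) because \(t\in V\) and \(s'\notin V\).

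\emph{Forbidden pairs.} For \(\{u,v\}\in\mathcal F_{\mathrm{rch}}\), both endpoints lie in \(V_{\mathrm{rch}}\subseteq V'\) and are distinct since \(\mathcal F\subseteq\binom V2\). A booby-trap pair \(\{p_{2i-1},w_i\}\) with \(1\le i\le q\) satisfies \(2i-1\le 2q-1\), so \(p_{2i-1}\) exists. Its two elements are distinct because \(P\cap W=\emptyset\). Hence \(\mathcal F'\subseteq\binom{V'}{2}\).

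The only mildly delicate point is index bookkeeping: the attachment and booby-trap indices \(p_{2i-1}\) must stay within \(p_1,\dots,p_{2q-1}\), and the fresh sets must be pairwise disjoint and disjoint from \(V\). Both follow directly from how the definition introduces these vertices. In the degenerate case \(q=1\) the spine range \(j=1,\dots,2q-2\) is empty, but \(p_1\), \(w_1\) and the arc \((p_1,w_1)\) still exist, so no special treatment is needed.
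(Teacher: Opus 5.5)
Your proposal is correct and follows essentially the same direct containment check as the paper: finiteness, arc endpoints in \(V'\times V'\), looplessness by arc type, \(s'\neq t\) via \(s'\notin V\), and \(\mathcal F'\subseteq\binom{V'}{2}\). The paper additionally notes that \(G_{\mathrm{rch}}\) is a DAG, so that \(\triangleleft=\mathsf{LE}(G_{\mathrm{rch}})\) is well-defined, which you take for granted; your explicit index check \(2i-1\le 2q-1\) and your \(q=1\) remark are small additions that the paper leaves implicit.
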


\begin{proof}
We verify each part of the definition of a \textsc{PAFP} instance.

First, the reachable core is well-defined. Since \(I\) is a \textsc{PAFP}
instance, \(G=(V,E)\) is a finite loopless directed graph, \(s,t\in V\), and
\(s\neq t\). The set
\(
V_{\mathrm{rch}}
=
\{v\in V:\operatorname{dist}_G(s,v)<\infty\}
\)
is therefore a finite subset of \(V\). Moreover, \(s\in V_{\mathrm{rch}}\), since
\(\operatorname{dist}_G(s,s)=0\). Hence
\(
q:=|V_{\mathrm{rch}}|\ge 1\).
The set
\(
E_{\mathrm{rch}}
=
E\cap (V_{\mathrm{rch}}\times V_{\mathrm{rch}})
\)
is a set of arcs with both endpoints in \(V_{\mathrm{rch}}\), and
\(
\mathcal F_{\mathrm{rch}}
=
\{\{u,v\}\in\mathcal F:u,v\in V_{\mathrm{rch}}\}
\)
satisfies
\(
\mathcal F_{\mathrm{rch}}\subseteq \binom{V_{\mathrm{rch}}}{2}\). Since \(G\) is a DAG, its subdigraph
\(
G_{\mathrm{rch}}=(V_{\mathrm{rch}},E_{\mathrm{rch}})
\)
is also a DAG. Therefore the linear-extension routine used in
Definition~\ref{def:normalize} returns a strict total order
\(\triangleleft\) on \(V_{\mathrm{rch}}\). Thus the oriented forbidden-pair arc set
\(
A_{\triangleleft}(\mathcal F_{\mathrm{rch}})
=
\{(u,v):\{u,v\}\in\mathcal F_{\mathrm{rch}}
\text{ and }u\triangleleft v\}
\)
is well-defined. Since every pair in
\(\mathcal F_{\mathrm{rch}}\) has two distinct endpoints, every arc in
\(A_{\triangleleft}(\mathcal F_{\mathrm{rch}})\) has two distinct endpoints, and
\(
A_{\triangleleft}(\mathcal F_{\mathrm{rch}})
\subseteq V_{\mathrm{rch}}\times V_{\mathrm{rch}}\).
Consequently,
\(
E_0
=
E_{\mathrm{rch}}\cup A_{\triangleleft}(\mathcal F_{\mathrm{rch}})
\subseteq V_{\mathrm{rch}}\times V_{\mathrm{rch}}\).
Moreover, \(E_0\) contains no loop: \(E_{\mathrm{rch}}\) contains no loop because
\(G\) is loopless, and \(A_{\triangleleft}(\mathcal F_{\mathrm{rch}})\) contains
no loop because \(\mathcal F_{\mathrm{rch}}\subseteq\binom{V_{\mathrm{rch}}}{2}\).

Next consider the fresh gadget vertices. By construction,
\(
s'\notin V\)
the vertices
\(
p_1,\ldots,p_{2q-1}
\)
are pairwise distinct and lie outside \(V\cup\{s'\}\), and the vertices
\(
w_1,\ldots,w_q
\)
are pairwise distinct and lie outside
\(
V\cup\{s'\}\cup \{p_1,\ldots,p_{2q-1}\}\).
Thus all vertices in
\(
\{s'\}\cup P\cup W
\)
are distinct from each other and from every vertex of \(V\), where
\(
P:=\{p_1,\ldots,p_{2q-1}\}\) and 
\(
W:=\{w_1,\ldots,w_q\}\). The output vertex set is
\(
V'
=
V_{\mathrm{rch}}\cup\{t\}\cup\{s'\}\cup P\cup W\).
This is a finite set. It contains \(s'\) by definition, and it contains \(t\)
because \(\{t\}\subseteq V'\). Also,
\(
s'\neq t\)
since \(t\in V\) while \(s'\notin V\).

We now check that \(E'\) is a valid loopless arc set on \(V'\). The spine arc
set is
\(
E_{\mathrm{spine}}
=
\{(s',p_1)\}\cup
\{(p_j,p_{j+1}):j=1,\ldots,2q-2\}\).
Every endpoint of every arc in \(E_{\mathrm{spine}}\) lies in
\(\{s'\}\cup P\subseteq V'\). These arcs are not loops, because \(s'\) is
distinct from \(p_1\), and \(p_j\) is distinct from \(p_{j+1}\) for every
\(j\). The attachment arc set is
\(
E_{\mathrm{att}}
=
\{(p_{2i-1},w_i):i=1,\ldots,q\}
\cup
\{(w_i,r_i):i=1,\ldots,q\}\).
For every \(i\), the vertices \(p_{2i-1}\) and \(w_i\) lie in \(V'\), and they
are distinct by freshness. Also, \(w_i\in W\subseteq V'\), while
\(r_i\in V_{\mathrm{rch}}\subseteq V'\); these two vertices are distinct because
\(w_i\notin V\) and \(r_i\in V_{\mathrm{rch}}\subseteq V\). Hence every arc in
\(E_{\mathrm{att}}\) has two distinct endpoints in \(V'\). Since
\(
E_0\subseteq V_{\mathrm{rch}}\times V_{\mathrm{rch}}\subseteq V'\times V'\),
and since all arcs in \(E_{\mathrm{spine}}\) and \(E_{\mathrm{att}}\) also have
endpoints in \(V'\), we have
\(
E'
=
E_0\cup E_{\mathrm{spine}}\cup E_{\mathrm{att}}
\subseteq V'\times V'\).

The preceding paragraphs also show that \(E'\) contains no loop. Therefore
\(
G'=(V',E')\)
is a finite loopless directed graph.

It remains to verify that \(\mathcal F'\) is a valid forbidden-pair set on
\(V'\). By definition,
\(
\mathcal F'
=
\mathcal F_{\mathrm{rch}}
\cup
\bigl\{
\{p_{2i-1},w_i\}:
i\in\{1,\ldots,q\}\text{ and }r_i\neq s
\bigr\}\).
We already have
\(
\mathcal F_{\mathrm{rch}}\subseteq \binom{V_{\mathrm{rch}}}{2}
\subseteq \binom{V'}{2}\).
For every \(i\), both \(p_{2i-1}\) and \(w_i\) lie in \(V'\), and they are
distinct by freshness. Hence
\(
\{p_{2i-1},w_i\}\in \binom{V'}{2}\).
Therefore
\(
\mathcal F'\subseteq \binom{V'}{2}\).

Combining these facts, \(I'=(G',s',t,\mathcal F')\) satisfies the definition of
a \textsc{PAFP} instance.
\end{proof}
\section{MSO Formulation for Bounded-Treewidth}
\label{app:mso}
We now spell out the bounded-treewidth subroutine used in the last step of the proof of Theorem~\ref{thm:bfs-backward-fpt}.
Let
\(
k:=\operatorname{tw}(U[R_G])\).
Consider the finite relational structure
\(
\mathcal A_{I_G}:=(R_G,\mathsf{Arc},\mathsf{Forb},s,t)\),
where the universe is \(R_G\), where
\(\mathsf{Arc}(x,y)\) holds precisely when \((x,y)\in E(G[R_G])\), and
where \(\mathsf{Forb}(x,y)\) holds precisely when
\(\{x,y\}\in \mathcal F_G\). Thus \(\mathsf{Arc}\) is a directed binary
relation, while \(\mathsf{Forb}\) is a symmetric binary relation encoding the
forbidden pairs. The Gaifman graph of \(\mathcal A_{I_G}\) is exactly
\(U[R_G]\): two distinct vertices occur together in some tuple of
\(\mathsf{Arc}\) or \(\mathsf{Forb}\) if and only if they are adjacent in the
underlying undirected graph of \(G[R_G]\) or form a forbidden pair in
\(\mathcal F_G\). The distinguished vertices \(s\) and \(t\) may equivalently
be treated as constants, or as unary singleton predicates; in either case they
do not increase the Gaifman graph.

The existence of a safe directed \(s\)-\(t\) path is expressible by a fixed
monadic second-order formula over this structure. Namely, let
\(\varphi_{\mathrm{PAFP}}(s,t)\) be the formula
\[
\begin{aligned}
\exists Q\,\bigl(&Q(s)\land Q(t)\\
&{}\land
\forall x\forall y\,
\bigl((Q(x)\land Q(y))\implies \neg \mathsf{Forb}(x,y)\bigr)\\
&{}\land
\forall X\,\bigl(
  [\,\forall z\,(X(z)\implies Q(z))\land X(s)\land \neg X(t)\,]\\
&\hspace{3.6cm}\implies
  \exists x\exists y\,
  (X(x)\land Q(y)\land \neg X(y)\land \mathsf{Arc}(x,y))
\bigr)\bigr).
\end{aligned}
\]
Here \(Q\) is intended to be a safe set of vertices containing \(s\) and
\(t\). The second line says that \(Q\) contains no forbidden pair. The last
two lines are the usual cut formulation of directed reachability from \(s\) to
\(t\) inside the subdigraph induced by \(Q\): every subset of \(Q\) that
contains \(s\) but not \(t\) has an outgoing \(\mathsf{Arc}\)-edge to
\(Q\setminus X\).

We claim that
\(
\mathcal A_{I_G}\models \varphi_{\mathrm{PAFP}}(s,t) \)
if and only if \(I_G\) is a \textsc{YES}-instance. Indeed, if \(I_G\) has a
safe directed \(s\)-\(t\) path, take \(Q\) to be the vertex set of this path.
The safety condition gives the second line of the formula, and for every
\(X\subseteq Q\) with \(s\in X\) and \(t\notin X\), the first vertex of the path
outside \(X\) has its predecessor inside \(X\), giving the outgoing arc
required by the cut condition. Conversely, suppose the formula holds for some
set \(Q\). Let \(R\subseteq Q\) be the set of vertices reachable from \(s\) in
the directed subgraph \(G[R_G][Q]\). If \(t\notin R\), then \(R\) is a subset
of \(Q\) containing \(s\) but not \(t\), and by definition of reachability
there is no arc from \(R\) to \(Q\setminus R\), contradicting the cut condition.
Thus \(t\) is reachable from \(s\) in \(G[R_G][Q]\). Taking a vertex-simple
directed \(s\)-\(t\) path inside this subgraph gives a directed \(s\)-\(t\)
path whose vertices are all contained in \(Q\), and hence the path is safe
because \(Q\) contains no forbidden pair.

Therefore \textsc{PAFP} on \(I_G\) is an MSO model-checking problem on a
relational structure whose Gaifman graph is \(U[R_G]\). By the standard MSO
model-checking theorem for bounded-treewidth relational structures
\cite[Sec.~3]{arnborgLS91}, equivalently the bounded-treewidth framework used
for forbidden-pairs path problems by Bodlaender--Jansen--Kratsch
\cite[Prop.~6]{bodlaender13}, this model-checking problem can be decided in
time
\(
g(k)\cdot |I|^{O(1)}\)
for some computable function \(g\). In the present proof, we do not need to
compute the exact value of \(k\): the path decomposition constructed in the proof of Theorem~\ref{thm:bfs-backward-fpt}, restricted to the vertex set \(R_G\), is an
explicit path decomposition of \(U[R_G]\) of width at most
\(2b+2\beta-1\). Hence
\(
k=\operatorname{tw}(U[R_G])
\le \operatorname{pw}(U[R_G])
\le 2b+2\beta-1\).
It follows that \(I_G\), and therefore also the original instance \(I\), can be
decided in time
\(
f(b+\beta)\cdot |I|^{O(1)}\)
for some computable function \(f\).

\end{document}